\title[]{Maximizing the Egalitarian Welfare in Friends and Enemies Games}
\author{Edith Elkind}
\affiliation{
  \institution{Northwestern University}
  \city{Evanston}
  \country{United States}}
\email{edith.elkind@northwestern.edu}
\author{Michele Flammini}
\affiliation{
  \institution{Gran Sasso Science Institute}
  \city{L'Aquila}
  \country{Italy}}
\email{michele.flammini@gssi.it}
\author{Giovanna Varricchio*}
\affiliation{
  \institution{University of Calabria}
  \city{Rende}
  \country{Italy}}
\email{giovanna.varricchio@unical.it}
\begin{abstract}
We consider the complexity of maximizing egalitarian welfare in \emph{Friends and Enemies Games}---a subclass of hedonic games in which every agent partitions other agents into \emph{friends} and \emph{enemies}. We investigate two classic
scenarios proposed in the literature, namely, \emph{Friends Appreciation} ($\FA$) and \emph{Enemies Aversion} ($\EA$): in the former, each agent primarily cares about the number of friends in her coalition, 
breaking ties based on the number of enemies, 
while in the latter, the opposite is true. 
For $\EA$, we show that our objective is hard to approximate within $O(n^{1-\epsilon})$, for any fixed $\epsilon>0$, and provide a polynomial-time $(n-1)$-approximation. For $\FA$, we obtain an NP-hardness result and a polynomial-time approximation algorithm. Our algorithm achieves a ratio of $2-\Theta(\frac{1}{n})$ when every agent has at least two friends; however, if some agent has at most one friend, its approximation ratio deteriorates to $n/2$.
We recover the $2-\Theta(\frac{1}{n})$ approximation ratio for two important variants: when randomization is allowed and when the friendship relationship is symmetric.
Additionally, for both $\EA$ and $\FA$ we identify special cases where the optimal egalitarian partition can be computed in polynomial time.
\end{abstract}
\keywords{Hedonic Games, Friends and Enemies, Egalitarian Welfare}
\newcommand{\BibTeX}{\rm B\kern-.05em{\sc i\kern-.025em b}\kern-.08em\TeX}
\newtheorem{observation}{Observation}
\newtheorem{theorem}{Theorem}
\newtheorem{lemma}{Lemma}
\newtheorem{corollary}{Corollary}
\theoremstyle{definition}
\newtheorem{example}{Example}
\newtheorem{remark}{Remark}
\newcommand{\set}[1]{\left\{#1\right\}}
\newcommand{\modulus}[1]{\left\lvert #1 \right\rvert }
\DeclareMathOperator{\expect}{\mathbb{E}}
\newcommand{\expectation}[1]{\expect\left[#1\right]}
\newcommand{\exptDistr}[2]{\expect_{#2}\left[#1\right]}
\newcommand{\agents}{\mathcal{N}}
\newcommand{\ESW}{\mathsf{ESW}}
\newcommand{\maxEg}{\textsc{maxESW}}
\newcommand{\FA}{\mathsf{FA}}
\newcommand{\EA}{\mathsf{EA}}
\newcommand{\outcomes}{\Pi}
\newcommand{\Gf}{G^f}
\newcommand{\Gsf}{G^{\mathit{sf}}}
\newcommand{\PC}{\textsc{PC}}
\newcommand{\minPC}{\textsc{minPC}}
\newcommand{\allF}{\textsc{allF}}
\newcommand{\oneF}{\textsc{oneF}}
\newcommand{\onlyF}{\textsc{onlyF}}
\newcommand{\weaklyC}{\text{\normalfont\ttfamily WeaklyConn}}
\newcommand{\oneFalgo}{\text{\normalfont\ttfamily OneWeaklyConn}}
\newcommand{\randAlgo}{\text{\normalfont\ttfamily RandAlgo}}
\newcommand{\OPT}{\textsf{OPT}}
\newcommand{\opt}{\textsf{opt}}
\newcommand{\instance}{\mathcal{I}}
\newcommand{\classNP}{\textsf{NP}}
\newcommand{\classP}{\textsf{P}}
\newcounter{mechanismCounter}
\crefname{mechanism}{Mechanism}{Mechanism}
\begin{document}


\pagestyle{fancy}
\fancyhead{}


\maketitle 


\section{Introduction}
Hedonic Games (HGs)~\cite{dreze1980hedonic} provide a game-theoretic framework for analyzing coalition formation among selfish agents and have been extensively studied in the literature (see, e.g., \cite{aziz2013pareto,aziz2013computing,banerjee2001core,bogomolnaia2002stability,elkind2009hedonic,elkind2020price,gairing2010computing}).
In these games, the objective is to partition a set of agents into disjoint coalitions, with each agent's satisfaction determined solely by the members of her coalition.

Different preference models give rise to different
subclasses of HGs. For instance, in \emph{additively separable} HGs (ASHGs)~\cite{bogomolnaia2002stability}, agents assign values to each other and evaluate coalitions by summing the values they assign to every other member. A particularly natural scenario is when agents classify others as either \emph{friends} or \emph{enemies}. Two canonical preference models have been studied in this setting~\cite{dimitrov2006simple}: under \emph{Friends Appreciation} ($\FA$) preferences, agents prioritize coalitions with more friends and, in case of ties, prefer those with fewer enemies; conversely, under \emph{Enemies Aversion} ($\EA$) preferences, agents prefer coalitions with the lowest number of enemies and, if the number of enemies is the same, favor those with more friends.

While most of the existing literature on HGs has focused on stability, i.e., the resilience of an outcome to individual or group deviations, another important concept in coalition formation is the {\em social welfare}, which captures the quality of a partition. Specifically, the {\em utilitarian} welfare, which is given by the sum of the agents' utilities, measures the overall happiness of the agents in a given partition. However, maximizing utilitarian welfare may come at a cost of treating some agents badly if doing so increases the total sum. In contrast, the {\em egalitarian} welfare is determined by the minimum utility among all agents. Thus, a partition that maximizes the egalitarian welfare is fairer, in the sense that the lowest utility is as high as possible.
Despite its appeal, egalitarian welfare has received limited attention in the context of hedonic games.

\subsection{Our Contribution}
We study the computation of the maximum egalitarian welfare in Friends and Enemies Games. 

For $\EA$ games, we prove that the problem is hard to approximate within $O(n^{1-\epsilon})$ for any fixed $\epsilon>0$,  where $n$ is the number of agents, and complement this hardness result with a polynomial-time algorithm achieving an $(n-1)$-approximation. For $\FA$ games, we show that maximizing egalitarian welfare is computationally hard. We then provide a polynomial-time approximation algorithm whose performance depends on the structure of the instance: it achieves an approximation of $2 -\Theta(\frac{1}{n})$, when every agent has at least two friends, and an $n/2$-approximation in the worst case. We further improve this result in two important settings, namely, (1) when randomization is allowed, and (2) when the preferences are symmetric, that is, the friendship relations are mutual. For both cases, we design algorithms attaining an approximation ratio of $2-\Theta(\tfrac{1}{n})$. Finally, we complement these results by identifying special cases of both $\EA$ games and $\FA$ games where the optimal egalitarian partition can be computed exactly in polynomial time.

\subsection{Related Work}
There is a substantial body of literature on hedonic games; we refer the interested reader to~\cite{AzizS16} for a comprehensive overview. In what follows, we focus on work that is directly related to ours.

Aziz et al.~\cite{aziz2013computing} consider
ASHGs, and show that computing an optimal  egalitarian partition is strongly \classNP-hard; moreover, verifying whether a given partition  maximizes the egalitarian welfare is co\classNP-complete. The complexity of this problem is further investigated by Hanaka et al.~\cite{hanaka2019computational}, who identify several tractable cases. Peters~\cite{peters2016graphical} obtained a related result, showing that an optimal egalitarian partition is polynomial-time computable in graphical HGs as long as the underlying graph has bounded treewidth. In addition, egalitarian welfare has been studied in online ASHGs~\cite{cohen2025egalitarianism} and in ASHGs where at most $k$ coalitions can be formed~\cite{waxman2020maximizing}.
For {\em simple fractional hedonic games} (sFHGs) with symmetric preferences, Aziz et al.~\cite{aziz2015welfare} show that maximizing the egalitarian welfare is \classNP-hard, and provide a polynomial-time $3$-approximation algorithm. Aziz et al.~also state that in instances where the underlying graph is a tree, an optimal egalitarian partition can be computed in polynomial time via dynamic programming. This idea was later formalized by Hanaka et al.~\cite{hanaka2025maximizing}, who present an efficient algorithm for bounded treewidth graphs. The egalitarian welfare has also been employed as a quality measure for evaluating stable outcomes in modified fractional HGs~\cite{monaco2019performance}.

A closely related problem is the existence of a wonderful (a.k.a. perfect) partition in HGs where agents have {\em dichotomous preferences}, that is, each agent values each coalition as either $0$ or $1$. A wonderful partition is one where each agent assigns value $1$ to her coalition. Clearly, a wonderful partition exists 
if an only if there exists a partition whose egalitarian welfare is $1$.
Establishing whether such a partition exists is, in general, \classNP-hard~\cite{peters2016complexity,aziz2013pareto}, though some tractable cases have been identified~\cite{peters2016complexity,constantinescu2023solving}. Notably, the positive result of Constantinescu et al.~\cite{constantinescu2023solving} implies that the problem of maximizing the egalitarian welfare in anonymous HGs with single-peaked preferences is polynomial-time solvable.

Friends and Enemies Games constitute a widely studied subclass of hedonic games and have been further extended to capture more complex social contexts~\cite{dimitrov2006simple,rothe2018borda,kerkmann2022altruistic,barrot2019unknown,deligkas2025balanced,chen2023Hedonic}. In this line of research, most of the focus has been on finding desirable outcomes according to various stability notions. Beyond stability, aspects related to strategyproofness have also been investigated~\cite{dimitrov2004enemies,flammini2022strategyproof,flammini2025non,klaus2023core}. From a welfare perspective, however, only the utilitarian welfare has been studied. In particular, for $\EA$ preferences, maximum  utilitarian welfare is not approximable within a factor of $O(n^{1-\epsilon})$, where $n$ is the number of agents; however, there is a polynomial-time  $O(n)$-approximation algorithm~\cite{flammini2022strategyproof}. This bound can be improved, with high probability, in scenarios where friend and enemy relations are generated according to specific probabilistic models~\cite{bullinger2025welfare}.  Even under $\FA$ preferences, the problem remains \classNP-hard, though a polynomial-time $(4+o(1))$-approximation has been obtained~\cite{flammini2025non}.

To the best of our knowledge, egalitarian welfare has not previously been considered in Friends and Enemies Games.

\section{Model and Preliminaries}
Given a positive integer $k$, we denote by $[k]$ the set $\set{1,\dots, k}$.

In hedonic games, a set of $n$ agents $\agents=\set{1,\dots,n}$ needs to be partitioned into disjoint subsets. We refer to subsets of $\agents$ as {\em coalitions}; the set of all coalitions containing an agent $i\in\agents$ is denoted by $\agents^i$.
An outcome of a hedonic game with the set of agents $\agents$ is a {\em coalition structure}, i.e., a partition $\pi=\set{C_1,\dots, C_m}$ of $\agents$ such that $\cup_{k=1}^m C_k = \agents$ and $C_k \cap C_\ell= \emptyset$ for all $k, \ell\in [m]$ with $k\neq \ell$. We denote by $\outcomes$ the set of all partitions of $\agents$.
Given an outcome $\pi\in\outcomes$, we denote by $\pi(i)$ the coalition in $\pi$ that contains agent $i$. The coalition $\agents$ is called the \emph{grand coalition};
abusing terminology, 
we also refer to the partition
$\set{\agents}$ as the grand coalition.
A coalition of size~$1$ is called a {\em singleton coalition}.   

In hedonic games, each agent $i\in\agents$ is endowed with a preference relation $\succeq_i$ over $\agents^i$. Given two coalitions $X, Y\in\agents^i$, we say that $i$ {\em weakly prefers} $X$ to $Y$ if $X\succeq_i Y$. These preferences are lifted to partitions: an agent $i\in\agents$ weakly prefers $\pi$ to $\pi'$ if $\pi(i)\succeq_i\pi'(i)$.

\subsection{Friends and Enemies Games}
In hedonic games with friends and enemies, every agent $i\in\agents$ partitions the other agents into a set of friends $F_i$ and a set of enemies $E_i$, with $F_i \cup E_i = \agents \setminus \{i\}$ and $F_i \cap E_i = \emptyset$. We write $f_i:=|F_i|$,  $e_i=|E_i|=n-f_i-1$. 

The agents' preferences 
are based on {\em Friends Appreciation} ($\FA$) when
for each $i\in\agents$ it holds that  $X\succeq_i Y$ iff
\begin{align*}
&\modulus{X\cap F_i}> \modulus{Y\cap F_i} \; \; \mbox{or}\\
&\modulus{X\cap F_i} = \modulus{Y\cap F_i} \mbox{ and }  \modulus{X\cap E_i} \leq \modulus{Y\cap E_i} \enspace,
\end{align*}
and they are based on {\em Enemies Aversion} ($\EA$) when $X\succeq_iY$ iff
\begin{align*}
&\modulus{X\cap E_i}< \modulus{Y\cap E_i} \; \; \mbox{or}\\
&\modulus{X\cap E_i} = \modulus{Y\cap E_i} \mbox{ and }  \modulus{X\cap F_i} \geq \modulus{Y\cap F_i} \enspace.
\end{align*}
In other words, under $\FA$, a coalition is preferred over another one if it contains a higher number of friends; if the number of friends is the same, the coalition with fewer enemies is preferred. On the other hand, under $\EA$, a coalition is preferred if it contains fewer enemies; if the number of enemies is the same, the coalition with more friends is preferred.

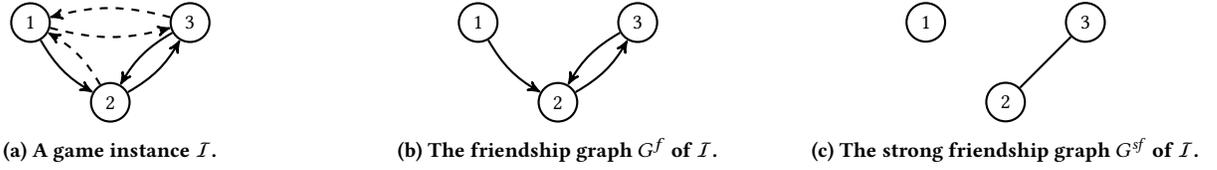
\begin{figure*}[th]\centering 
\def \variable {1.5cm}
	\begin{subfigure}{0.33\textwidth}
		\centering
		\begin{tikzpicture}[->,>=stealth',auto,node distance=\variable,
		thick,main node/.style={circle,draw,font=\sffamily\bfseries\small}]
		\node[main node] (2) {$2$};
		\node[main node] (1) [above left of=2] {$1$};
		\node[main node] (3) [above right of=2] {$3$};
		\path[every node/.style={font=\sffamily\small}]
		(1) edge [bend right =15]  node {} (2)
		(3) edge [bend right =15] node {} (2)
		(2) edge [bend right =15] node {} (3)
		(2) edge [bend right =15, dashed]  node {} (1)
		(1) edge [bend right =15, dashed]  node {} (3)
		(3) edge [bend right =15, dashed]  node {} (1)	
		;
		\end{tikzpicture}
		\caption{A game instance $\instance$.}
	\label{subfig::firstInstance}
	\end{subfigure}
	\begin{subfigure}{0.33\textwidth}
		\centering
		\begin{tikzpicture}[->,>=stealth',auto,node distance=\variable,
		thick,main node/.style={circle,draw,font=\sffamily\bfseries\small}]
		\node[main node] (2) {$2$};
		\node[main node] (1) [above left of=2] {$1$};
		\node[main node] (3) [above right of=2] {$3$};
		\path[every node/.style={font=\sffamily\small}]
		(1) edge [bend right =15]  node {} (2)
		(3) edge [bend right =15] node {} (2)
		(2) edge [bend right =15] node {} (3);
		\end{tikzpicture}
		\caption{The friendship graph~$\Gf$ of $\instance$.}
		\label{subfig::secondInstance}
	\end{subfigure}
    	\begin{subfigure}{0.33\textwidth}
		\centering
		\begin{tikzpicture}[-,>=stealth',auto,node distance=\variable,
		thick,main node/.style={circle,draw,font=\sffamily\bfseries\small}]
		\node[main node] (2) {$2$};
		\node[main node] (1) [above left of=2] {$1$};
		\node[main node] (3) [above right of=2] {$3$};
		\path[every node/.style={font=\sffamily\small}]
		(2) edge [] node {} (3);
		\end{tikzpicture}
		\caption{The strong friendship graph~$\Gsf$ of $\instance$.}
		\label{subfig::thirdInstance}
	\end{subfigure}
 \caption{A game instance $\instance$ of friends and enemies games and the corresponding graphs $\Gf$ and $\Gsf$. }
\end{figure*}
\begin{example} \label{example::instance}
Let us describe a simple instance for both $\FA$ and $\EA$. 
Note that once the friendship relationships are given, the enemy relationships are uniquely determined. Therefore, we can define an instance by specifying the friendship relations.
Let $\agents=\set{1,2,3}$ be the set of agents, and let $F_1=\set{2}$, $F_2=\set{3}$, $F_3=\set{2}$;
see \Cref{subfig::firstInstance}, where a directed edge from agent $i$ to agent $j$ represents $i$'s opinion of $j$, and solid and dashed edges represent friend and enemy relations, respectively. 
\end{example}

A well-studied class of hedonic games is {\em additively separable hedonic games (ASHGs)}, where agents assign values $v_i(j)$ to each other, the utility $u_i(C)$ an agent $i$ derives from a coalition $C\in\agents^i$ is the sum of her values for the coalition members other than herself, i.e., $u_i(C) = \sum_{j\in C\setminus\{i\}}v_i(j)$, and $C\succeq_i D$ iff $u_i(C)\ge u_i(D)$. Given a partition $\pi$, for each $i\in\agents$ we write $u_i(\pi):=u_i(\pi(i))$. 
Both $\FA$ games and $\EA$ games can be encoded as ASHGs, with value functions determined by friendship relationships.
Indeed, in the $\FA$ case, for every agent $i \in \agents$, we can set 
\begin{align*}
v_i(j)= \begin{cases}
1,& \enspace \mbox{ if } j\in F_i, \\ -\frac{1}{n},& \enspace \mbox{ if } j\in E_i \, .
\end{cases}
\end{align*}
so that the positive effect of one friend is greater than the overall negative effect of all enemies.
Similarly, in the $\EA$ case, we can set, for every agent $i \in \agents$, 
\begin{align*}
v_i(j)= \begin{cases}
1,& \enspace \mbox{ if } j\in F_i, \\ -n,& \enspace \mbox{ if } j\in E_i \, .
\end{cases}
\end{align*}

\begin{example}\label{example::computingUtilities}
Consider the instance described in \Cref{example::instance} and the partition $\pi=\set{\set{1,2,3}}$. Each agent $i\in\agents$ has one friend and one enemy in $\agents$, so 
under $\FA$ we have $u_1(\pi)= v_1(2) +v_1(3) = 1 - \frac{1}{3} = \frac{2}{3}$ and, similarly,  $u_2(\pi)=u_3(\pi)=\frac{2}{3}$, whereas under $\EA$ we obtain $u_1(\pi)= v_1(2) +v_1(3) = 1 - 3 = -2$, and also $u_2(\pi)=u_3(\pi)=-2$.

Observe that, under $\FA$, the presence of an enemy in a coalition may be tolerated as long as there is at least one friend, while, under $\EA$, the presence of an enemy is never acceptable, 
in the sense that an agent would prefer to be a singleton coalition.
\end{example}

An $\FA$ or $\EA$ instance $\instance$ is specified by a set of agents $\agents$ and the set of friendship relations $F=\set{(i,j) \,\vert\, i\in\agents, j\in F_i}$.  We denote by $F^s$ the set of mutual friendships: $F^s=\set{\set{i,j} \mid i\in F_j \wedge j\in F_i}$.

\paragraph{Graph Representation.}
In our discussion, we will make use of two different types of graph representation:
\begin{itemize}
    \item The {\em friendship graph} $\Gf=(\agents, F)$, representing all directed friendship relationships, and
    \item the {\em strong friendship graph} $\Gsf=(\agents, F^s)$, representing only mutual friendship relationships. 
\end{itemize} \Cref{subfig::secondInstance,subfig::thirdInstance} show the friendship graph and the strong friendship graph, respectively, for the instance described in \Cref{example::instance}.
When an instance is {\em symmetric}, that is, for all $i, j\in\agents$ it holds that $i\in F_j$ if and only if $j\in F_i$, we have $\Gf\equiv\Gsf$.

Throughout the paper, we may define a game instance either as $\instance = (\agents, F)$ or by directly referring to the respective  graph $\Gf$ or $\Gsf$, depending on the context. For simplicity, we will often omit $\instance$ from the notation, implicitly assuming that $\agents$ and $F$ are given.

\subsection{Egalitarian Welfare, Approximation, and Other Guarantees}
For an instance $\instance$ of an ASHG, we define the {\em egalitarian welfare} of an outcome $\pi$ as 
\[
\ESW^\instance(\pi)= \min_{i\in \agents} u_i(\pi)\, .
\]
We are interested in finding outcomes that (approximately) maximize $\ESW$. Accordingly, we define \maxEg\ as the problem of maximizing the egalitarian welfare. Given an instance $\instance$, we denote by $\opt(\instance)$ the value of an optimal solution of \maxEg\ for the instance $\instance$, and by $\OPT(\instance)$ any partition of the agents such that $\ESW^\instance(\OPT(\instance))=\opt(\instance)$. When $\instance$ is clear from the context, we simply write $\ESW$, $\opt$, and $\OPT$.

\begin{remark}\label{rem:Fneqempty}
In any instance of $\EA$ or $\FA$, if  $F_i=\emptyset$ for some $i\in\agents$, then the maximum attainable egalitarian welfare is necessarily $0$, since $i$ cannot achieve a positive utility in any coalition. In this case, partitioning the agents into singletons is optimal. Henceforth, we will always assume that $F_i\neq\emptyset$ for every $i\in\agents$.
\end{remark}

\paragraph{Approximation Ratio.}
In this work, we aim to find partitions that maximize the egalitarian welfare. Unfortunately, it turns out that \maxEg\ is \classNP-hard, both for $\EA$ and for $\FA$. For this reason, we will turn our attention to approximate solutions. Given an algorithm $\mathcal{A}$, we define its {\em approximation ratio} as follows:
\begin{align*}
    r_n(\mathcal{A})= \max_{\substack{\instance=(\agents, F):\\|\agents|=n}} \frac{\opt(\instance)}{\ESW^\instance(\mathcal{A}(\instance))}\quad\ \ \text{and}\quad\ \  r({\mathcal A})=\sup_{n\in\mathbb N}r_n({\mathcal A})\, ,
\end{align*}
where $\mathcal{A} (\instance)$ is the partition returned by $\mathcal{A}$ on input $\instance$. 

\paragraph{Guarantees for the Agents.}
In the pursuit of finding efficient algorithms that yield good approximation for \maxEg, we will simultaneously provide some guarantees to  (almost) all agents.
In particular, given an agent $i$, we say that a coalition $C$ satisfies for $i$:
\begin{itemize}
    \item \allF, if all friends of $i$ are included in $C$;
    \item \onlyF, if $C$ consists solely of friends of $i$, though not necessarily all of them;
    \item \oneF, if $C$ contains at least one friend of $i$.
\end{itemize}

Let $\Gamma$ be a guarantee that a coalition may fulfill for an agent. 
A partition $\pi$  {\em guarantees $\Gamma$ for $A\subseteq\agents$} if each coalition $C\in \pi$ guarantees $\mathcal{G}$ to all agents in $A\cap C$.
Moreover, $\pi$
is a {\em minimal-by-refinement partition guaranteeing $\Gamma$ for $A\subseteq\agents$}
if
it is not possible to split $C$ into non-empty coalitions $C_1,\dots, C_k$ so that the partition $\pi' = \pi \setminus\set{C} \cup \set{{C_1}, \dots, {C_k}}$
guarantees $\Gamma$ for $A\subseteq\agents$. If $A=\agents,$ we say that $\pi$ is a {\em minimal-by-refinement partition guaranteeing $\Gamma$}.

We now present an example to illustrate these notions.

\begin{example}
    The grand coalition $\pi=\set{\agents}$ satisfies \allF. However, it may fail \onlyF, as there may exist an agent $i$ with $F_i\neq \agents\setminus\set{i}$. Under our hypothesis that each agent has at least one friend, $\pi$ also satisfies \oneF. 
    In fact, under this hypothesis \allF\ implies \oneF.
    However, $\{\agents\}$ need not be the minimal-by-refinement partition that satisfies \allF. Indeed, if $\agents=\set{1,2,3,4}$ and $F_1=\set{2}, F_2=\set{1}, F_3=\set{4},$ and $F_4=\set{3}$,  then the unique minimal-by-refinement partition satisfying \allF\ is $\set{\set{1,2},\set{3,4}}$.
\end{example}

\section{Enemies Aversion}
When constructing (approximately) optimal partitions under $\EA$ preferences, we can restrict our attention to mutual friendship relationships only. 
Indeed, if some agent is in a coalition with an enemy, her utility---and hence $\ESW$---is negative, whereas for the coalition structure $\pi^\text{sing}$ that consists of $n$ singleton coalitions we have $\ESW(\pi^\text{sing})=0$.
Consequently, for $\ESW$ to be non-negative, an outcome $\pi$ must satisfy \onlyF\ for all agents, i.e., each coalition $C$ in $\pi$ must induce a clique in the strong friendship graph $\Gsf$. Moreover, if $\pi$ contains a singleton coalition, we have $\ESW(\pi)\le 0$, even if $\pi$ satisfies \onlyF\ for all agents. Thus, for $\ESW(\pi)$ to be positive, $\pi$ must satisfy \onlyF\ and \oneF\ for all agents, i.e., induce a partition of $\Gsf$ into cliques of size at least $2$.  

Combining these insights, we derive the following lemma.

\begin{lemma}\label{lemma:cliquePartitionPositiveESW}
    A partition $\pi\in\Pi$ satisfies $\ESW(\pi) \geq 0$ if and only if $\pi$ induces a clique partition of $\Gsf$. Furthermore, $\ESW(\pi) > 0$ if and only this clique partition contains no cliques of size~$1$.
\end{lemma}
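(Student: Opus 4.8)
The plan is to prove both directions of each biconditional by directly unpacking the $\EA$ utility function $u_i(C)=|C\cap F_i|-n\cdot|C\cap E_i|$ implied by the encoding $v_i(j)=1$ for $j\in F_i$ and $v_i(j)=-n$ for $j\in E_i$. The key observation driving everything is the magnitude mismatch: a single enemy contributes $-n$, while all friends together contribute at most $n-1<n$, so the sign of $u_i(C)$ is governed entirely by whether $C$ contains an enemy of $i$.

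First I would handle the "$\ESW(\pi)\ge 0$" equivalence. For the forward direction, suppose $\ESW(\pi)\ge 0$, so $u_i(\pi(i))\ge 0$ for every $i$. Fix $i$ and let $C=\pi(i)$. If $C$ contained some enemy $j\in E_i$, then $u_i(C)\le |C\cap F_i|-n\le (n-1)-n=-1<0$, a contradiction. Hence $C\cap E_i=\emptyset$, i.e.\ every other member of $C$ is a friend of $i$. Since this holds for every $i\in C$ and every coalition $C\in\pi$, each $C$ is a clique in $\Gsf$: indeed for any $i,j\in C$ we get $j\in F_i$ and $i\in F_j$, which is exactly mutual friendship, an edge of $\Gsf$. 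So $\pi$ induces a clique partition of $\Gsf$. For the converse, if $\pi$ induces a clique partition of $\Gsf$, then in each $C\in\pi$ every pair $i,j$ is mutually friendly, so $C\cap E_i=\emptyset$ and $u_i(C)=|C\cap F_i|=|C|-1\ge 0$; taking the minimum over $i$ gives $\ESW(\pi)\ge 0$.

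Next I would sharpen this to the strict inequality. Given that $\pi$ already induces a clique partition (which we may assume, since otherwise $\ESW(\pi)<0$ by the first part and the claim is vacuous on both sides), we have $u_i(\pi(i))=|\pi(i)|-1$ for every $i$. Thus $\ESW(\pi)=\min_{i}(|\pi(i)|-1)=\bigl(\min_{C\in\pi}|C|\bigr)-1$. This is strictly positive if and only if $\min_{C\in\pi}|C|\ge 2$, i.e.\ the clique partition has no clique of size $1$ (no singleton). That is exactly the stated condition.

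I don't anticipate a genuine obstacle here — the argument is a short case analysis once the ASHG encoding is in hand. The only point requiring a little care is making explicit that membership of $i$ and $j$ in a common coalition $C$ with $C\cap E_i=C\cap E_j=\emptyset$ forces the edge $\{i,j\}\in F^s$ (both directions of friendship), so that "no enemies inside coalitions" translates precisely into "coalitions are cliques of $\Gsf$" rather than merely cliques of the directed friendship graph $\Gf$. Everything else is bookkeeping with the identity $u_i(C)=|C|-1$ valid on cliques of $\Gsf$.
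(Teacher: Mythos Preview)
Your proposal is correct and mirrors the paper's own reasoning: the paper does not give a formal proof environment for this lemma but derives it from the prose immediately preceding the statement, which makes exactly the same observations you do (an enemy in the coalition forces negative utility since $-n$ outweighs at most $n-1$ friends, so each coalition must be a clique of $\Gsf$; and a singleton clique gives utility $0$, so strict positivity requires all cliques to have size $\ge 2$). Your write-up is simply a more careful formalization of that discussion, including the explicit check that mutual friendship is what is needed so that the cliques live in $\Gsf$ rather than $\Gf$.
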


Thus, in an optimal coalition structure, the set of agents must be partitioned into cliques. Moreover, to maximize the $\ESW$, the smallest clique in the partition must be as large as possible.

However, we will now see that this structural result does not mean that maximizing $\ESW$ is easy. 

\subsection{Hardness of Approximation}
\begin{theorem}\label{thm:inapproxEA}
Under $\EA$, for every $\epsilon> 0$, there is no polynomial-time algorithm that approximates $\maxEg$ within a factor $O(n^{1 - \epsilon})$, unless $\classP= \classNP$. The hardness result holds even for symmetric instances.
\end{theorem}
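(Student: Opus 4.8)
The plan is to reduce from the problem of finding a maximum clique, exploiting \Cref{lemma:cliquePartitionPositiveESW}, which tells us that any outcome with positive $\ESW$ must induce a partition of $\Gsf$ into cliques of size at least $2$. The key observation is that in such a partition, $\ESW(\pi)$ equals $s-1$, where $s$ is the size of the \emph{smallest} clique used; so to get a large egalitarian welfare, \emph{every} clique in the partition must be large. This is a much more global constraint than merely finding one large clique, and it is what makes the problem hard to approximate at the $n^{1-\epsilon}$ scale.

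First I would take an instance $G = (V, E)$ of \maxClique\ together with a target $k$, and build a symmetric $\FA$/$\EA$ instance whose strong friendship graph $\Gsf$ consists of many disjoint ``gadget'' copies arranged so that partitioning all of $\agents$ into cliques of size $\ge k$ is possible if and only if $G$ has a clique of size $k$, while if $G$ has no clique of size close to $k$ then some gadget forces a tiny clique (say of size $1$ or $2$) into every valid clique partition. A natural way to do this is to take a suitable ``padded'' version of $G$: add a small number of universal vertices, or take disjoint copies of $G$ together with a controlled number of isolated-ish structures, so that (i) when $G$ has a large clique, one can tile the whole vertex set with cliques all of size $\ge k$, giving $\ESW = k-1$, and (ii) when $\omega(G)$ is small, \Cref{lemma:cliquePartitionPositiveESW} forces $\ESW \le 1$ (or even $\le 0$ once a singleton is unavoidable). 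Combined with the known inapproximability of \maxClique\ within $n^{1-\epsilon}$ (Håstad/Zuckerman), and a careful accounting of how the number of agents $n$ in the constructed instance compares to $|V|$, this yields the claimed $O(n^{1-\epsilon})$ inapproximability. Since $\Gf \equiv \Gsf$ for symmetric instances and the construction only ever specifies mutual friendships, the hardness holds for symmetric instances, and because \Cref{lemma:cliquePartitionPositiveESW} and the clique-size characterization of $\ESW$ apply verbatim to both $\EA$ and $\FA$ on symmetric instances with all cliques of size $\ge 2$, the reduction covers both preference models simultaneously. (Indeed, on a symmetric instance partitioned into cliques of size $\ge 2$, every agent's utility is ``number of friends in her coalition'' under $\FA$ and the clique-size minus one picture is the same, so the $\EA$ statement specializes cleanly.)

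The technical heart is the gap amplification: a single copy of $G$ only gives $\ESW \in \{k-1, \dots\}$ versus $\ESW \le \omega(G)-1$, which is a gap of roughly $k / \omega(G)$ — already polynomial, but one must ensure the \emph{whole} vertex set is tileable by large cliques in the YES case, not just that one large clique exists. The clean fix is to force the partition: make $\Gsf$ a disjoint union of blow-ups or copies of $G$ plus a ``clock'' gadget that is a clique partitionable into pieces of size exactly $k$ only, so that in the NO case the leftover vertices in some block cannot be assembled into anything of size $\ge 2$ and \Cref{lemma:cliquePartitionPositiveESW} kicks in. I would also double-check that the reduction is polynomial and that the relationship $n = \mathrm{poly}(|V|)$ is tame enough that $|V|^{1-\epsilon}$ translates to $n^{1-\epsilon'}$ for every $\epsilon'$; choosing the padding so that $n = \Theta(|V|)$ (e.g. a constant number of copies) makes this automatic. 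The main obstacle I anticipate is engineering the forcing gadget so that in the soundness case a valid clique partition is genuinely impossible without a small clique — i.e. making the ``if $G$ has no $k$-clique then $\ESW$ is tiny'' direction airtight — rather than merely making large-$\ESW$ partitions hard to find.
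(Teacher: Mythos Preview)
Your proposal is a plan rather than a proof, and the central gadget---the one that would turn ``$G$ has a $k$-clique'' into ``$\Gsf$ can be tiled by cliques all of size $\ge k$''---is never actually constructed. You flag this yourself as ``the main obstacle,'' and it is a real one: \maxClique\ promises you \emph{one} large clique, whereas $\maxEg$ under $\EA$ requires \emph{every} block of the partition to be large. A single $k$-clique in $G$ says nothing about whether the remaining $|V|-k$ vertices can themselves be covered by large cliques, so the completeness direction (YES $\Rightarrow$ large $\ESW$) does not follow from any of the padding ideas you list (disjoint copies, a few universal vertices, a ``clock'' clique). Conversely, for soundness you would need that a small $\omega(G)$ forces a small clique somewhere in \emph{every} partition, and none of the suggested gadgets enforce that either.

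The paper sidesteps this by reducing from the right source problem: \minPC\ (minimum clique partition, equivalently chromatic number of the complement), which is already about partitioning \emph{all} vertices into few cliques and is inapproximable within $|V|^{1-\epsilon}$. The construction is simply $\Gsf = G$ together with a fresh clique $V'$ of size $|V|$ joined completely to $V$. If $G$ admits a partition into $k^*$ cliques, one distributes $V'$ evenly among them to obtain a clique partition of $\agents$ with minimum block size at least $|V|/k^*$, giving $\opt(\instance)\ge |V|/k^*-1$. Conversely, any partition $\pi$ of $\agents$ with $\ESW(\pi)=s$ restricts to a clique partition of $G$ with at most $2|V|/(s+1)$ parts. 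Chaining these two inequalities shows that an $n^{1-\epsilon}$-approximation for $\maxEg$ yields an $O(|V|^{1-\epsilon/2})$-approximation for \minPC. No forcing gadget is needed because the source problem already quantifies over all of $V$.

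One further error: your remark that the reduction ``covers both preference models simultaneously'' is incorrect. \Cref{lemma:cliquePartitionPositiveESW} is an $\EA$ statement; under $\FA$ an agent can have positive utility in a coalition containing enemies, so the soundness direction collapses. Indeed, the paper later gives a $(2-\Theta(1/n))$-approximation for $\FA$, so an $n^{1-\epsilon}$ lower bound for $\FA$ is false.
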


To prove the theorem, we make use of an inapproximability result for the closely related problem: {\sc Partition into Cliques}.
{
\smallskip
\begin{center}
\fbox{
\begin{varwidth}{0.95\linewidth}
\noindent {\sc Partition into Cliques (PC)}:

\smallskip
\noindent\textit{Input:} An undirected graph $G = (V, E)$ and a positive integer $K\leq \modulus{V}$.

\smallskip
\noindent\textit{Question:} Can the vertices of $G$ be partitioned into $k\leq K$ disjoint sets $V_1, \dots, V_k$ so that, for each $h\in [k]$, the subgraph $G[V_h]$ induced by $V_h$ is a clique?
\end{varwidth}
}
\end{center}
\smallskip
}

Let \minPC\ denote the corresponding minimization problem, i.e., finding the minimum number of cliques in a clique partition of $G$. Note that every graph admits a clique partition, since a single vertex forms a clique of size $1$. The problem \minPC\ cannot be approximated within a factor  $O(\modulus{V}^{1 - \epsilon})$, for every $\epsilon>0$, unless $\classP= \classNP$~\cite{zuckerman2007linear}.

\begin{proof}[Proof of \Cref{thm:inapproxEA}]
Assume there exists an $n^{1-\epsilon}$-approxima\-tion algorithm $\mathcal A$ for \maxEg\ under $\EA$,  
for some $\epsilon>0$. We will show that this assumption contradicts the known inapproximability result for the \minPC\ problem. 

Let $G = (V, E)$ be an arbitrary instance of \minPC. We now construct a corresponding $\EA$ instance $\mathcal I$ of the \maxEg\ problem. By \Cref{lemma:cliquePartitionPositiveESW}, it suffices to describe the friendship graph $\Gsf = (\agents, F^s)$ for $\mathcal I$, since only coalition structures inducing clique partitions of $\Gsf$ offer non-negative $\ESW$.
To construct our instance $\mathcal I$, we set $\agents = V\cup V'$, where $|V'|=|V|$ and $V'$ is disjoint from $V$. For each 
$\set{u, v}\in E$, we create an edge $\set{u,v} \in F^s$. Additionally, we create an edge between every pair of vertices in $V'$ and connect every vertex of $V$ to every vertex of $V'$.
Note that $\mathcal I$ satisfies $n=\modulus{\agents}=2\cdot \modulus{V}$.

Let us now relate $\opt(\instance)$, i.e., the optimal \maxEg\ value for $\mathcal I$, to the optimal value of \minPC, denoted $\opt^{\tiny\PC}$, i.e., the minimum number of cliques needed to cover the vertices of $G = (V, E)$. Let $V_1, \dots, V_{k^*}$ be a partition of $G$ into $k^*=\opt^{\tiny\PC}$ disjoint cliques.
Partition the vertices in $V'$ into 
coalitions $C_1', \dots, C'_{k^*}$ as evenly as possible; this ensures that 
$|C'_h|\ge \left\lfloor\frac{\modulus{V}}{k^*}\right\rfloor$ for each $h\in [k^*]$. Now set 
$C_h=V_h\cup C'_h$ for all $h\in[k^*]$, 
and note that $\pi= \set{C_1, \dots, C_{k^*}}$ is a clique partition of $\agents$. Since $V_h\neq\emptyset$ for all $h\in[k^*]$, we obtain $\modulus{C_h} \geq \left\lfloor\frac{\modulus{V}}{k^*}\right\rfloor+1\ge \frac{\modulus{V}}{k^*}$ for each $h\in [k^*]$. 
As the utility of each agent in a clique of size $t$ is $t-1$, we obtain
\begin{align}\label{eq:optESW}
    \opt(\instance) \geq \frac{\modulus{V}}{k^*} -1 = \frac{\modulus{V}}{\opt^{\tiny\PC}}-1.
\end{align}

Consider now the partition $\pi$ computed by our hypothetical approximation algorithm $\mathcal A$ for \maxEg, and let $s=\ESW(\pi)$. By \Cref{lemma:cliquePartitionPositiveESW}, $\pi$ must be a clique partition; moreover, $\ESW(\pi)=s$ implies that each coalition in $\pi$ contains at least $s+1$ vertices.
Hence, $\pi$ consists of at most $ \frac{2\cdot\modulus{V}}{s+1}$ disjoint cliques of $\Gsf$. We can now convert $\pi$ into a partition of $G$ into $k$ cliques with $k\leq \frac{2\cdot\modulus{V}}{s+1}$, as the restriction of each coalition $C\in \pi$ to the nodes of $V$ is either empty or a clique of $G$. Since $\mathcal A$ computes an $n^{1-\epsilon}$ approximation of \maxEg, we have 
\begin{align}\label{ratio:upper}
\frac{\opt(\instance)}{s} \leq n^{1-\epsilon}. 
\end{align}
On the other hand, 
$k\leq \frac{2\cdot\modulus{V}}{s+1}$ implies $s\le \frac{2\cdot\modulus{V}}{k}-1$; combining this with~\eqref{eq:optESW}, we get
\begin{align}\label{ratio:lower}
    \frac{\opt(\instance)}{s} \geq \frac{\frac{\modulus{V}}{\opt^{\tiny\PC}}-1}{\frac{2\cdot\modulus{V}}{k}-1} 
    \ge \frac{\frac{\modulus{V}}{\opt^{\tiny\PC}}}{\frac{2\cdot\modulus{V}}{k}-1}  - 1
    > \frac{k}{2\cdot\opt^{\tiny\PC}}  - 1 \, ,
\end{align}
    where the last two inequalities hold since $1\le \frac{2\cdot\modulus{V}}{k}-1 < \frac{2\cdot\modulus{V}}{k}$.

    As $n=2\cdot\modulus{V}$, 
    by putting together~\eqref{ratio:lower} and~\eqref{ratio:upper}, 
    we obtain $\frac{k}{\opt^{\tiny\PC}} < 2\cdot(2\cdot\modulus{V})^{1-\epsilon} + 2$. For sufficiently large $V$, this quantity is smaller than $\modulus{V}^{1-\epsilon/2}$, a contradiction to the inapproximability of \minPC.
\end{proof}

\subsection{ Linear Approximation in Polynomial Time}
By \Cref{lemma:cliquePartitionPositiveESW}, to obtain a bounded approximation for a given $\EA$ instance, we need to determine whether the associated strong friendship graph $\Gsf$ admits a {\em non-trivial clique partition}, i.e., a clique partition in which each agent belongs to a clique of size at least $2$. 

\begin{observation}\label{lemma:nonTrivialPartition}
    If $\Gsf$ does not admit a non-trivial clique partition, then a partition into singletons achieves optimal $\ESW$. Otherwise, any non-trivial clique partition guarantees an $\ESW$ of at least $1$.
\end{observation}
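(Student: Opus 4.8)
The statement has two parts, and both follow quite directly from Lemma~\ref{lemma:cliquePartitionPositiveESW}. First I would establish the ``otherwise'' direction, since it is the more informative half. Suppose $\Gsf$ admits a non-trivial clique partition $\pi$, i.e.\ one in which every clique has size at least $2$. Then by the second part of Lemma~\ref{lemma:cliquePartitionPositiveESW} we have $\ESW(\pi) > 0$, and since utilities under $\EA$ are integers (each agent in a clique of size $t$ has utility exactly $t-1$), $\ESW(\pi) \ge 1$. The only subtlety is to note that $t-1 \ge 1$ for every coalition, which is immediate from $t \ge 2$; so in fact $\min_{i} u_i(\pi) = \min_{C \in \pi} (|C| - 1) \ge 1$.

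For the first part, I would argue the contrapositive of the usual form: suppose $\Gsf$ does \emph{not} admit a non-trivial clique partition, and let $\pi^\star = \OPT(\instance)$ be an optimal partition. I want to show $\opt(\instance) = 0$, which together with $\ESW(\pi^{\text{sing}}) = 0$ shows the singleton partition is optimal. If $\opt(\instance) > 0$, then by Lemma~\ref{lemma:cliquePartitionPositiveESW} $\pi^\star$ induces a clique partition of $\Gsf$ with no size-$1$ clique — that is, a non-trivial clique partition — contradicting the hypothesis. Hence $\opt(\instance) \le 0$; and since $\ESW(\pi^{\text{sing}}) = 0$ we get $\opt(\instance) = 0$, attained by $\pi^{\text{sing}}$.

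I do not anticipate any real obstacle here: the observation is essentially a restatement of Lemma~\ref{lemma:cliquePartitionPositiveESW} together with the integrality of $\EA$ utilities and the existence of the baseline singleton partition with $\ESW = 0$. The one point worth stating carefully is the integrality step in the ``otherwise'' direction — going from $\ESW(\pi) > 0$ to $\ESW(\pi) \ge 1$ — but this is immediate once one recalls that in the $\EA$ encoding the utility of an agent in a clique of size $t$ is $t - 1 \in \mathbb{Z}_{\ge 0}$, so any non-trivial clique partition has every agent's utility at least $1$.
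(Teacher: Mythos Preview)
Your proposal is correct and matches the paper's approach: the paper presents this as an Observation with no separate proof, treating it as an immediate consequence of Lemma~\ref{lemma:cliquePartitionPositiveESW} together with the fact that the singleton partition has $\ESW=0$. Your write-up simply spells out those details, including the (indeed trivial) step that in a non-trivial clique partition every agent sits in a clique of size $t\ge 2$ and hence has utility $t-1\ge 1$.
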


Note that a non-trivial clique partition exists if and only if it is possible to partition the graph into triangles and a matching. This observation enables us to design a
polynomial-time algorithm that computes an $(n-1)$-approximation to $\maxEg$.

\begin{theorem}
    Under $\EA$, there exists a polynomial-time algorithm computing an $(n-1)$-approximation to $\maxEg$. Furthermore, if $\ESW>0$, the computed partition satisfies \oneF\ for all the agents.
\end{theorem}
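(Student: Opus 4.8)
The plan is to reduce \maxEg\ under $\EA$ to the purely combinatorial task of finding a non-trivial clique partition of the strong friendship graph $\Gsf$, and to solve that task in polynomial time via matching. Concretely, the algorithm is: decide whether $\Gsf$ admits a non-trivial clique partition; if it does, output one such partition $\pi$; otherwise output the partition $\pi^{\text{sing}}$ into singletons. The correctness analysis is then almost entirely bookkeeping on top of \Cref{lemma:cliquePartitionPositiveESW} and \Cref{lemma:nonTrivialPartition}. If $\Gsf$ has no non-trivial clique partition, \Cref{lemma:nonTrivialPartition} gives $\opt=0$ and $\pi^{\text{sing}}$ is exactly optimal. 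If it does, then $\ESW(\pi)\ge 1$, whereas $\opt\le n-1$: the utility of an agent in a clique of size $t$ equals $t-1\le n-1$, and any coalition containing an enemy or a singleton makes $\ESW\le 0<\ESW(\pi)$, so the optimum is attained by a clique partition whose cliques have size at most $n$. Hence $\opt(\instance)/\ESW(\mathcal A(\instance))\le (n-1)/1=n-1$. Finally, whenever $\ESW>0$ the algorithm has returned a non-trivial clique partition, so each agent shares its coalition only with mutual friends and in particular with at least one friend; thus $\pi$ satisfies \oneF\ (indeed \onlyF) for all agents, as required.

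The substantive step is the polynomial-time subroutine that decides whether $\Gsf$ admits a non-trivial clique partition and, if so, constructs one. As noted before the theorem, this is equivalent to partitioning $\Gsf$ into vertex-disjoint triangles together with a matching on the remaining vertices (a clique of even size $k$ splits into $k/2$ edges, and a clique of odd size $k\ge 3$ into one triangle plus $(k-3)/2$ edges). I would attack the triangle-and-matching partition problem through matching: compute a maximum matching $M$ of $\Gsf$ — whose set $U$ of exposed vertices is necessarily independent — and, if $U\neq\emptyset$, try to "absorb" each $v\in U$ into a triangle by solving a bipartite assignment between $U$ and the edges of $M$, linking $v$ to an edge $\{a,b\}\in M$ whenever $v$ is adjacent to both $a$ and $b$. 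An assignment saturating $U$ immediately yields the desired partition: the augmented triangles $\{v\}\cup\{a,b\}$, together with the edges of $M$ that were not used, cover all of $\agents$.

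The main obstacle is establishing correctness of this subroutine, i.e.\ that a triangle-and-matching partition exists \emph{if and only if} one arises from a suitably chosen maximum matching in the above manner. A single fixed maximum matching does not suffice: for instance, if the path $1{-}4{-}5$ is attached at vertex $1$ to the triangle $\{1,2,3\}$, then the maximum matching $\{\{1,4\},\{2,3\}\}$ leaves $5$ with no absorbing $M$-edge, even though $\{\{4,5\},\{1,2,3\}\}$ is a valid partition. So the proof must either supply an augmenting/alternating-path exchange argument showing that whenever a partition exists some maximum matching admits a valid absorption of $U$ (the delicate point being that absorbing exposed vertices must not consume matched edges needed to cover others), or else recast the whole problem as a single matching computation on an auxiliary graph equipped with triangle gadgets and invoke general-factor machinery. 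Once this subroutine is in place, the remaining arguments — the $(n-1)$ ratio and the \oneF\ guarantee — follow directly from the two results already proved in this section.
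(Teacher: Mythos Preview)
Your high-level argument --- reduce to deciding whether $\Gsf$ admits a non-trivial clique partition, output one if so and singletons otherwise, then bound $\opt\le n-1$ against $\ESW(\pi)\ge 1$ --- is exactly the paper's proof, including the \oneF\ (and indeed \onlyF) observation at the end.

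The gap is precisely the one you yourself flag: your matching-plus-absorption subroutine is not shown to be correct, and your own path-plus-triangle counterexample demonstrates that the naive version fails. You outline two possible repairs (an exchange argument over maximum matchings, or a gadget reduction to general matching) but do not carry either out, so the proof is incomplete as written. The paper closes this gap in one line by citing a known result of Hell and Kirkpatrick~\cite{hell1984packings}: deciding whether a graph admits a perfect cover by copies of $\mathbb{K}_2$ and $\mathbb{K}_3$, and finding one if it exists, is polynomial-time solvable. Since a non-trivial clique partition exists iff such a $\{\mathbb{K}_2,\mathbb{K}_3\}$-cover exists (any clique of size at least $2$ decomposes into edges and at most one triangle, as you note), this citation is all that is needed. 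Your gadget idea is in fact the right instinct --- the algorithms in this line of work do reduce to ordinary matching on an auxiliary graph --- but the construction and its correctness proof are nontrivial enough that the paper simply defers to the literature rather than re-deriving them.
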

\begin{proof}
The problem of finding a perfect covering of the vertices of a graph by $\mathbb{K}_2$ and $\mathbb{K}_3$, i.e.\ cliques of size $2$ or $3$, admits a polynomial-time algorithm~\cite{hell1984packings}. 
We can execute this algorithm on $\Gsf$.
If the algorithm reports that there is no perfect covering of $\Gsf$ with $\mathbb{K}_2$ and $\mathbb{K}_3$, every partition of $\Gsf$ into cliques contains a singleton, and hence the maximum attainable egalitarian welfare is~$0$ (so, in particular, a partition into singletons maximizes $\ESW$).
On the other hand, if the algorithm returns a perfect covering, it corresponds to a partition of the agents into coalitions of size $2$ and $3$, which satisfies $\oneF$. Any such partition guarantees utility of at least $1$ to every agent. Since the maximum utility an agent can derive is at most $n-1$, the $\ESW$ of the partition constructed by our algorithm is within a factor of $n-1$ from optimal. 
\end{proof}

We also note that \maxEg\ becomes easy if $\Gsf$ is {\em triangle-free}, that is, no cycle in $\Gsf$ has length $3$. 

\begin{theorem}
   Under $\EA$, if $\Gsf$ is triangle-free, then \maxEg\ is in \classP.   
\end{theorem}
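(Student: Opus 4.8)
The plan is to reduce \maxEg\ under $\EA$ with a triangle-free $\Gsf$ to a maximum matching computation, exploiting the structural characterization from \Cref{lemma:cliquePartitionPositiveESW}. The key observation is that when $\Gsf$ is triangle-free, every clique in $\Gsf$ has size at most $2$: a clique of size $3$ would be a triangle. Hence, by \Cref{lemma:cliquePartitionPositiveESW}, any partition $\pi$ with $\ESW(\pi)\ge 0$ consists only of coalitions of size $1$ and $2$, where each size-$2$ coalition is an edge of $\Gsf$. In such a partition, every agent in a size-$2$ coalition gets utility exactly $1$ (one friend, no enemies), and every agent in a singleton gets utility $0$. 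Therefore $\ESW(\pi)=1$ if $\pi$ contains no singletons, i.e. if the edges used by $\pi$ form a perfect matching of $\Gsf$, and $\ESW(\pi)=0$ otherwise. Consequently $\opt(\instance)=1$ iff $\Gsf$ has a perfect matching, and $\opt(\instance)=0$ otherwise (recalling from \Cref{rem:Fneqempty} that $\opt(\instance)\ge 0$ always, achieved by singletons).

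Concretely, the algorithm first computes a maximum matching $M$ in $\Gsf$ in polynomial time (e.g. via Edmonds' blossom algorithm). If $M$ is a perfect matching, output the partition whose coalitions are the edges of $M$; this satisfies $\oneF$ for all agents and has $\ESW=1=\opt(\instance)$. If $M$ is not perfect, output the partition into singletons, which has $\ESW=0$, and this is optimal since no partition with positive $\ESW$ exists. Correctness follows from the two paragraphs above: a partition with $\ESW\ge 1$ would have to pair up all agents along edges of $\Gsf$, which is exactly a perfect matching, so the non-existence of a perfect matching rules out any positive-$\ESW$ partition.

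There is essentially no hard step here; the only thing to be careful about is the triangle-free hypothesis doing precisely the work of collapsing the clique-partition problem (NP-hard in general, as \Cref{thm:inapproxEA} shows) to a matching problem, and checking that the utility bookkeeping is exact (utility $t-1$ in a clique of size $t$, so $0$ for singletons and $1$ for edges). One should also note that the reasoning uses only $\EA$ preferences via \Cref{lemma:cliquePartitionPositiveESW}; no separate argument is needed for the tie-breaking by number of friends, since within a clique there are no enemies and the friend count is forced by the clique size. I would state the result as: compute a maximum matching of $\Gsf$; it is perfect iff $\opt(\instance)=1$, in which case the matching itself is an optimal partition, and otherwise singletons are optimal with $\opt(\instance)=0$.
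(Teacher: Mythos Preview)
Your proposal is correct and follows essentially the same approach as the paper: observe that triangle-freeness forces all cliques in $\Gsf$ to have size at most $2$, reduce to computing a maximum matching in $\Gsf$, and return either the perfect matching (giving $\ESW=1$) or the singleton partition (giving $\ESW=0$). The paper's own proof is terser but identical in substance.
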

\begin{proof}
    In a triangle-free graph, no clique has size larger than~$2$. Thus, we proceed by computing a maximum matching in $\Gsf$. If $\Gsf$ admits a perfect matching, the maximum $\ESW$ is $1$, and an optimal partition is attained by creating a coalition for each pair of matched agents. Otherwise, the maximum $\ESW$ is $0$, and splitting the agents into singletons is an optimal partition. 
\end{proof}
\section{Friends Appreciation}
For $\FA$ preferences, unless specified otherwise, we will work with the $\Gf$ graph.
Let $f_{\min} = \min_{i \in \agents} f_i$ denote the minimum number of friends of any agent, 
and set 
$\agents_{\min} = \{i\in\agents\mid f_i=f_{\min}\}$.
We start by presenting simple bounds on $\opt$ for $\maxEg$;
for the upper bound, note that agents in $\agents_{\min}$ have at most $f_{\min}$ friends, and for the lower bound, consider the grand coalition.
\begin{restatable}{observation}{boundsOnOptFA}\label{obs:boundsOnOptFA}
Under $\FA$, 
$ f_{\min} - 1<\opt\leq  f_{\min}$. 
\end{restatable}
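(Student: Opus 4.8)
The plan is to establish the two inequalities separately, each from an extremal coalition structure. For the lower bound $\opt > f_{\min}-1$, I would exhibit a single partition whose egalitarian welfare strictly exceeds $f_{\min}-1$; the grand coalition $\pi = \{\agents\}$ is the natural candidate. Under $\FA$ with the encoding $v_i(j)=1$ for $j\in F_i$ and $v_i(j)=-\tfrac1n$ for $j\in E_i$, an agent $i$ in the grand coalition gets all $f_i$ of her friends and all $e_i = n-1-f_i$ of her enemies, so $u_i(\{\agents\}) = f_i - \tfrac{e_i}{n} = f_i - \tfrac{n-1-f_i}{n}$. Since $e_i \le n-1 < n$, we have $\tfrac{e_i}{n} < 1$, hence $u_i(\{\agents\}) > f_i - 1 \ge f_{\min}-1$ for every $i$. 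Taking the minimum over $i$ gives $\ESW(\{\agents\}) > f_{\min}-1$, and since $\opt \ge \ESW(\{\agents\})$, the lower bound follows.

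For the upper bound $\opt \le f_{\min}$, I would argue about any partition $\pi$ and focus on an agent $i \in \agents_{\min}$, i.e.\ one with $f_i = f_{\min}$. In her coalition $\pi(i)$, the number of her friends present is $|\pi(i)\cap F_i| \le f_i = f_{\min}$, and each such friend contributes exactly $1$ to her utility while every enemy present contributes a negative amount; hence $u_i(\pi) \le |\pi(i)\cap F_i| \le f_{\min}$. Therefore $\ESW(\pi) \le u_i(\pi) \le f_{\min}$ for every $\pi$, and in particular $\opt \le f_{\min}$.

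I do not anticipate a genuine obstacle here: both bounds are immediate from the definition of the $\FA$ utilities, and the only mild subtlety is making sure the lower bound is \emph{strict} — which hinges precisely on the fact that the total negative weight of all enemies, $\tfrac{e_i}{n}$, is strictly less than $1$ because $e_i \le n-1$. This is exactly why the paper chose the weight $-\tfrac1n$ rather than a larger penalty, so the argument is by design. The whole proof is a few lines; the Remark \ref{rem:Fneqempty} assumption ($F_i\neq\emptyset$, so $f_{\min}\ge 1$) is not even strictly needed for the inequalities, though it keeps the bounds meaningful (non-negative).
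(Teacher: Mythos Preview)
Your proposal is correct and matches the paper's own justification essentially verbatim: the paper derives the upper bound by noting that any agent in $\agents_{\min}$ has at most $f_{\min}$ friends (hence utility at most $f_{\min}$ in any partition), and the lower bound by considering the grand coalition, which is exactly what you do. Your extra care in checking the strictness of the lower bound via $e_i/n<1$ is appropriate and is the only point that needs any computation.
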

Consequently, in every optimal partition, every agent $i\in\agents_{\min}$ must end up in a coalition that contains all of her friends. 

\begin{corollary}\label{cor:allFfMin}
Under $\FA$, any \maxEg\ partition satisfies \allF\ for all the agents in $\agents_{\min}$.
\end{corollary}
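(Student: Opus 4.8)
The plan is to derive this directly from \Cref{obs:boundsOnOptFA}, which gives $\opt > f_{\min}-1$. Fix a \maxEg\ partition $\pi$, so that $u_i(\pi)\ge\ESW(\pi)=\opt>f_{\min}-1$ for every agent $i$, and in particular for every $i\in\agents_{\min}$. I would then argue by contradiction: suppose some $i\in\agents_{\min}$ has a friend outside her coalition, i.e.\ $\modulus{\pi(i)\cap F_i}\le f_i-1=f_{\min}-1$.

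The crux is the specific ASHG encoding of $\FA$ used in the paper, namely $v_i(j)=1$ for friends and $v_i(j)=-\tfrac1n$ for enemies. Under this encoding, $u_i(\pi)=\modulus{\pi(i)\cap F_i}-\tfrac1n\modulus{\pi(i)\cap E_i}$, and since $\modulus{\pi(i)\cap E_i}\le n-1<n$, the enemy penalty is strictly less than $1$. Hence, under the contradiction hypothesis,
\[
u_i(\pi)=\modulus{\pi(i)\cap F_i}-\tfrac1n\modulus{\pi(i)\cap E_i}\le (f_{\min}-1)-0=f_{\min}-1<\opt,
\]
contradicting $u_i(\pi)\ge\opt$. (Equivalently, one can simply invoke the remark already made after \Cref{obs:boundsOnOptFA} that in every optimal partition each $i\in\agents_{\min}$ ends up with all her friends; the computation above is exactly its justification.)

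Therefore $\modulus{\pi(i)\cap F_i}\ge f_{\min}=f_i$, forcing $\modulus{\pi(i)\cap F_i}=f_i$ and thus $F_i\subseteq\pi(i)$; that is, $\pi(i)$ satisfies \allF\ for $i$. Since this holds for every $i\in\agents_{\min}$, the partition $\pi$ guarantees \allF\ for $\agents_{\min}$, as claimed. I do not anticipate any real obstacle here — the only thing to be careful about is keeping the inequality $\opt>f_{\min}-1$ strict (so that the single unit lost by a missing friend cannot be recovered by the sub-unit enemy penalty), which is precisely what \Cref{obs:boundsOnOptFA} provides.
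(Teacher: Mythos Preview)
Your proof is correct and follows exactly the approach the paper intends: the corollary is stated immediately after \Cref{obs:boundsOnOptFA} as a direct consequence of the strict inequality $\opt>f_{\min}-1$, and your contradiction argument is precisely the justification (which the paper leaves implicit).
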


\subsection{NP-Hardness}
 It turns out that for $\FA$ preferences, too, computing a \maxEg\ partition is \classNP-hard; the hardness result persists even if agents' preferences are symmetric.

\begin{theorem}\label{thm:OPThard}
   Under $\FA$, \maxEg\ is \classNP-hard. The hardness result holds even for symmetric instances.
\end{theorem}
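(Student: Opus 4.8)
The plan is to reduce from a known \classNP-hard problem that naturally produces instances where the target egalitarian welfare is exactly $f_{\min}$; by \Cref{obs:boundsOnOptFA} the only interesting question is whether $\opt = f_{\min}$ or $\opt = f_{\min}-1$, so the reduction should hinge on forcing every agent in $\agents_{\min}$ into a coalition containing all her friends (\Cref{cor:allFfMin}) in a way that is globally consistent only when a certain combinatorial structure exists. A natural candidate is a packing/partition problem on graphs — for instance \threePartition, \textsc{Exact Cover by 3-Sets}, or a restricted graph partition problem — since the requirement that each low-degree agent be grouped with \emph{all} her friends imposes that the friendship graph be partitioned into specific pieces. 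I would design the symmetric friendship graph $\Gf=\Gsf$ so that the "tight" agents form gadgets whose all-friends coalitions can coexist disjointly if and only if the source instance is a yes-instance.

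Concretely, first I would take an instance of the source problem and build a symmetric friendship graph in which a distinguished set of agents all have exactly $f_{\min}$ friends, and the closed neighborhoods of these agents overlap in a controlled pattern that mirrors the source constraints. The key design goal: a partition achieving $\ESW = f_{\min}$ must, by \Cref{cor:allFfMin}, place each tight agent $i$ together with all of $F_i$; since coalitions are disjoint, two tight agents $i,j$ can be separated only if $F_i\cup\{i\}$ and $F_j\cup\{j\}$ are compatible (either disjoint or one forces merging), and I would arrange the gadgets so that a globally consistent simultaneous grouping exists exactly when the source instance admits its desired partition/cover. Agents not in $\agents_{\min}$ should have enough friends that they are automatically satisfied with utility $\ge f_{\min}$ in any reasonable coalition — e.g. by padding with a large clique of "helper" agents of high degree, being careful that these helpers themselves still get utility at least $f_{\min}$, which is easy if they are mutually all friends.

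Second, I would prove the two directions: (i) a yes-instance of the source problem yields a partition with $\ESW = f_{\min}$, by grouping each gadget's agents with all their friends according to the source solution and checking every agent — tight agents get exactly $f_{\min}$ friends and tolerate any number of enemies under $\FA$, helpers get at least $f_{\min}$; (ii) conversely, if $\ESW = f_{\min}$ then by \Cref{cor:allFfMin} every tight agent is with all her friends, and reading off which gadgets are grouped together recovers a valid solution to the source instance. The main obstacle I anticipate is the gadget design for direction (ii): ensuring that the \emph{only} way to satisfy all tight agents simultaneously corresponds to a source solution, i.e. preventing "cheating" partitions that lump several gadgets together into one big coalition and still give everyone $f_{\min}$ friends. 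This typically requires that merging incompatible gadgets would create some agent whose friend count is still $f_{\min}$ but whose neighborhood in that merged coalition is incomplete, or alternatively that the tight agents' neighborhoods are pairwise "rigid" — engineering this rigidity while keeping all of $\agents_{\min}$ at the same degree $f_{\min}$ and keeping the instance symmetric is the delicate part, and I would likely need to add small private-friend gadgets to each tight agent to pin down its coalition uniquely.
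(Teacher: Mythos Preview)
Your high-level plan --- reduce from a graph-partition problem and use the \allF\ constraint on agents in $\agents_{\min}$ to force structure --- matches the paper, which reduces from \textsc{Partition into Triangles}. But there is a concrete inconsistency in your sketch. You state the target threshold as $\ESW=f_{\min}$ and then, in direction (i), assert that ``tight agents get exactly $f_{\min}$ friends and tolerate any number of enemies under $\FA$''. These are incompatible: an agent with exactly $f_{\min}$ friends and $k>0$ enemies has utility $f_{\min}-k/n<f_{\min}$. If the threshold really is $f_{\min}$, the yes-direction must build coalitions that are \emph{cliques} on $F_i\cup\{i\}$ for every tight $i$; this is far stronger than what you describe, and your helper-clique padding would immediately violate it (helpers placed in a tight agent's coalition become her enemies). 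Conversely, if you lower the threshold enough that enemies are tolerated, then \Cref{cor:allFfMin} alone no longer prevents the ``lump several gadgets together'' cheat you correctly flag as the main obstacle.

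The paper's key idea, which simultaneously fixes this inconsistency and resolves the merging obstacle without any rigidity gadgets, is to target a \emph{fractional} threshold $f_{\min}-c/n$ (specifically $3-8/n$ with $f_{\min}=3$). This threshold tolerates up to $c$ enemies per agent, so the yes-direction is easy; but it also caps the size of any coalition containing a tight agent at $f_{\min}+c+1$, since everyone beyond her $f_{\min}$ friends is an enemy. That size cap is precisely what kills the big-merge cheat. Concretely, for each vertex $v$ of the source graph the paper creates an anchor $z_v$ with exactly three friends $a_v,b_v,c_v$, and for each source edge $\{u,v\}$ adds $\{a_u,a_v\},\{b_u,b_v\},\{c_u,c_v\}$; then $\ESW\ge 3-8/n$ forces each anchor's coalition to have size at most $12$ and to contain all three friends, which in turn forces the $a$- (and $b$-, $c$-) agents to pick up two more friends inside a size-$3$ vertex set --- i.e., a triangle. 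The enemy-count-as-size-bound mechanism is the missing ingredient that turns your outline into a proof.
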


Our proof proceeds by a reduction from the {\sc Partition into Triangles} problem, which can be formulated as follows:
{
\smallskip
\begin{center}
\fbox{
\begin{varwidth}{0.95\linewidth}
\noindent {{\sc Partition into Triangles}:}

\smallskip
\noindent\textit{Input:} An undirected graph $G=(V, E)$.\\
\smallskip
\noindent\textit{Question:} 
Is it possible to partition $V$
as $S_1, \dots, S_k$ so that for each $S_j$, $j\in[k]$, its induced subgraph is a triangle?
\end{varwidth}
}
\end{center}
\smallskip
}
{\sc Partition into Triangles} is known to be NP-hard~\cite{garey1979computers}.
Note that we can assume that $|V|=3k$ for some $k\in\mathbb N$, as otherwise we trivially have a ``no''-instance. Thus, in what follows, we will make this assumption.

\begin{proof}
    Consider an instance $G=(V, E)$ of {\sc Partition into Triangles} with $|V|=3k$. We will construct a (symmetric) instance of $\FA$ by specifying its associated graph $\Gsf$.

    For each vertex $v\in V$ we construct four vertices of $\Gsf$: $a_v$, $b_v$, $c_v$ and $z_v$. Thus, $\agents=\{a_v, b_v, c_v, z_v\mid v\in V\}$ and $n=\modulus{\agents}=4\modulus{V}$.
    For each $v\in V$, the graph $\Gsf$ contains edges $\{a_v, z_v\}$, $\{b_v, z_v\}$, and $\{c_v, z_v\}$. Also, for each edge $\{u, v\}\in E$ of $G$, the graph $\Gsf$ contains edges  $\{a_u, a_v\}$, $\{b_u, b_v\}$, and $\{c_u, c_v\}$. This completes the description of $\Gsf$.

    We claim that the instance of $\FA$ that corresponds to $\Gsf$ admits a partition $\pi$ with $\ESW(\pi)\ge 3-\frac{8}{n}$ if and only if $G$ is a yes-instance of {\sc Partition into Triangles}.

    Indeed, let $S_1, \dots, S_k$ be a partition of $G$ into triangles. For each $S_j$, $j\in[k]$, we create a coalition $C_j=\{a_v, b_v, c_v, z_v\mid v\in S_j\}$. This coalition contains $12$ agents, and each agent in it has $3$ friends and 8 enemies, so the utility of each agent is $3-\frac{8}{n}$.

    Conversely, suppose our $\FA$ game admits a partition $\pi$ with $\ESW(\pi)\ge 3-\frac{8}{n}$. Since $n=4|V|\ge 12$, we have $3-\frac{8}{n}>2$, so in this partition, each agent must be in a coalition with at least $3$ friends. Thus, for each $v\in V$ it holds that $z_v$ is in the same coalition as $a_v$, $b_v$ and $c_v$ (as $z_v$ has no other friends). It follows that each coalition in $\pi$ is of the form $C = \{a_v, b_v, c_v, z_v\mid v\in S\}$, where $S$ is a subset of $V$. Moreover, $z_v$'s coalition must contain at most $8$ enemies, implying $\modulus{S}\leq 3$. Now, if $\modulus{S}<3$ or if its induced graph is not a triangle, there is an agent $a_v$ in $C$ that has fewer than $3$ friends in $C$, and hence her utility is lower than $3-\frac{8}{n}$, a contradiction. Thus, each coalition in $\pi$ corresponds to a triangle in $G$, and these triangles are pairwise disjoint, i.e., $G$ admits a partition into triangles.
\end{proof}

In the remainder of this section, we present our approximability results.
We describe a simple algorithm whose approximation ratio is linear in the number of agents. While the linear approximation bound is tight for our algorithm, this is caused by instances with $f_{\min}=1$: we show that as soon as each agent has at least two friends, the approximation ratio of our algorithms becomes $2-\Theta\left(\frac{1}{n}\right)$. We also obtain $\left(2-\Theta\left(\frac{1}{n}\right)\right)$-approximation results for two other settings: for a randomized algorithm (in expectation) and for symmetric instances.

\subsection{Deterministic Bounded Approximation}
\Cref{cor:allFfMin} establishes that, in an optimal partition, the agents with the minimum number of friends, i.e., the ones in $\agents_{\min}$, must be in a coalition with all their friends.
As the marginal contribution of all enemies is, in absolute terms, less than that of a single friend, a naive approach would be to partition the agents so that each agent---not only the ones in $\agents_{\min}$---is placed in a coalition together with all of their friends, i.e., to create a partition that satisfies $\allF$. Among all such partitions, we prefer ones that avoid placing enemies in the same coalition as much as possible.
Thus, an attractive strategy is to create a coalition for each weakly connected component of $\Gf$; we refer to the resulting algorithm as \weaklyC:

\paragraph{\weaklyC:} Given a graph $\Gf$, let $\pi=(C_1, \dots, C_m)$ be the list of weakly connected components of $\Gf$; return the partition $\pi$. 

\smallskip 

Unfortunately, we will show that this algorithm does not always provide a constant-factor approximation to $\maxEg$.

\begin{theorem}\label{thm:apxWC}
Under $\FA$, the algorithm {\em \weaklyC} guarantees an approximation of $2 - \frac{6}{n+3}$ to $\maxEg$ if $f_{\min} \geq 2$ and $\frac{n}{2}$ otherwise.
\end{theorem}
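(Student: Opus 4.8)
The plan is to use the fact that \weaklyC\ returns a partition satisfying \allF: each weakly connected component $C$ of $\Gf$ contains every out-neighbour of each of its members, hence $C \supseteq F_i$ for every $i \in C$. Consequently the coalition of $i$ in the output consists of $i$, her $f_i$ friends, and $|C| - 1 - f_i$ enemies, so her utility is exactly
\[
u_i \;=\; f_i - \frac{|C| - 1 - f_i}{n} \;=\; \frac{(n+1)f_i - (|C| - 1)}{n}.
\]
First I would turn this into a lower bound on the egalitarian welfare of the output. Since $|C| \le n$ for every component, $u_i \ge \frac{(n+1)f_i - (n-1)}{n}$ for every $i$; this bound is increasing in $f_i$ and $f_i \ge f_{\min}$, so every agent — in particular one attaining the minimum utility — has utility at least $\frac{(n+1)f_{\min} - (n-1)}{n} \ge \frac{2}{n} > 0$. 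Hence $\ESW(\weaklyC) \ge \frac{(n+1)f_{\min} - (n-1)}{n}$.

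Next I would combine this with the upper bound $\opt \le f_{\min}$ from \Cref{obs:boundsOnOptFA}, which yields
\[
\frac{\opt}{\ESW(\weaklyC)} \;\le\; \frac{n\,f_{\min}}{(n+1)f_{\min} - (n-1)} \;=:\; \varphi(f_{\min}).
\]
It remains to analyse $\varphi$ on the integers at least $1$. A one-line derivative computation gives $\varphi'(x) = \frac{-n(n-1)}{\left((n+1)x - (n-1)\right)^2} < 0$, so $\varphi$ is decreasing; therefore $\varphi(f_{\min}) \le \varphi(1) = \frac{n}{2}$ unconditionally, and $\varphi(f_{\min}) \le \varphi(2) = \frac{2n}{n+3} = 2 - \frac{6}{n+3}$ whenever $f_{\min} \ge 2$. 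This gives exactly the two claimed bounds (and when $\opt \le \ESW(\weaklyC)$ the ratio is below $1$, hence still within both bounds).

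The argument is essentially mechanical once the utility formula for the output is written down; there is no deep obstacle. The only point requiring a little care is the lower bound on $\ESW(\weaklyC)$: one must ensure that a large component cannot push some agent's utility below $\frac{(n+1)f_{\min} - (n-1)}{n}$, which is handled by bounding $|C| \le n$ and using that the per-agent bound grows with $f_i$, so the worst agent in the output is effectively the one with the fewest friends. Packaging the two regimes through the single monotone function $\varphi$ then makes the two cases of the statement fall out together.
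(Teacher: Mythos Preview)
Your proof is correct and follows essentially the same route as the paper: both compute the per-agent utility lower bound $u_i \ge f_i - \tfrac{n-f_i-1}{n}$ from the fact that \weaklyC\ satisfies \allF, combine it with $\opt \le f_{\min}$, and then observe that the resulting ratio is decreasing in $f_{\min}$, evaluating at $f_{\min}=1$ and $f_{\min}=2$. Your packaging via the single monotone function $\varphi$ is a little cleaner than the paper's version, which instead asserts that the minimum-utility agent lies in $\agents_{\min}$, but the substance is identical.
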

\begin{proof}
    Consider an instance $\mathcal I$ of $\FA$. By \Cref{rem:Fneqempty}, we can assume that $F_i \neq \emptyset$ for all $i \in \agents$. Let $\pi$ denote the outcome returned by \weaklyC\ on $\mathcal I$.

     By \Cref{obs:boundsOnOptFA}, we have $\opt \leq f_{\min}$.
     On the other hand, in $\pi$, every agent $i\in\agents$ is in a coalition with $f_i$ friends and at most $n- f_i -1$ enemies. Therefore, 
     \[
     u_i(\pi)\geq f_i - \frac{n-f_i-1}{n} = f_i\cdot\frac{n+1}{n} -\frac{n-1}{n} \, .
     \]
    Let $j$ be an agent with minimum utility in $\pi$; note that $j\in\agents_{\min}$. The approximation ratio of $\weaklyC$ on $\mathcal I$ can be bounded as
    \begin{align*}
        \frac{\opt(\instance)}{\min_{i\in\agents} u_i(\pi)} \leq  \frac{f_j}{u_j(\pi)} \leq \frac{f_j}{f_j\cdot\frac{n+1}{n} -\frac{n-1}{n}} =\frac{f_j\cdot n}{f_j (n+1)-(n-1)}.
    \end{align*}
    The ratio $\frac{f_j \cdot n}{f_j(n+1) -(n-1)}$ is a decreasing function of $f_j$, so it takes its maximum value, which is $\frac{n}{2}$, if $f_j=1$. If $f_{\min}\ge 2$ (and hence $f_j\ge 2$), 
    we have
    $$
    \frac{f_j\cdot n}{f_j(n+1) - (n-1)} \le  \frac{2n}{2(n+1)-(n-1)}=\frac{2n}{n+3}=2-\frac{6}{n+3}.
    $$ 
This completes our proof.  
\end{proof}
In the appendix, we give an example showing that both bounds in the statement of \Cref{thm:apxWC} are tight.

\subsection{Assigning Agents with Only One Friend}
\Cref{thm:apxWC} establishes that the most challenging instances to approximate are those containing an agent who only has one friend. Indeed, a constant approximation can be achieved as soon as $\modulus{F_i}\geq 2$ for all $i\in\agents$.
In the next sections, we show that constant-factor approximation can be recovered in randomized and symmetric settings.
To this end, we need to tackle the problem of how to assign agents who only have one friend.  Hereafter, we 
denote by $\agents_1(\instance) =\set{i\in\agents \mid  f_i  = 1}$ the set of agents with a unique friend. When $\instance$ is clear from the context, we simply write $\agents_1$.

If $\agents_1\neq \emptyset$, \Cref{obs:boundsOnOptFA} and \Cref{cor:allFfMin} establish that, to obtain any approximation, we must satisfy \oneF$\equiv$\allF\ for all agents in $\agents_1$ and, moreover, $\opt \in (0,1]$. Given an $\FA$ instance $\instance$, we can guarantee \oneF\ to all agents in $\agents_1$ by placing all agents in the same coalition. A more sophisticated, yet tractable, approach is to create a minimal-by-refinement partition that enforces \oneF\ for all agents in~$\agents_1$; we will refer to this algorithm as \oneFalgo.

\paragraph{\oneFalgo:} 
On an input $\Gf=(\agents, F)$, the algorithm creates the graph $G_1^f=(\agents, F_1)$ by only maintaining the friend relationships of the agents in $\agents_1$, i.e., it sets $F_1=\set{(i,j)\in F \mid i\in \agents_1 }$.  Then it splits the agents into the weakly-connected components of $G_1^f,$ that is, applies \weaklyC\ on $G_1^f$. 

\smallskip 

By construction, the partition computed by \oneFalgo\ is the unique minimal-by-refinement partition that guarantees \oneF\ to all agents in $\agents_1$.
Indeed, if a partition $\pi$ satisfies \oneF\ for all agents in $\agents_1$, then every coalition $S\in \pi$ must contain a weakly connected component of $G_1^f$. Moreover, if $S$ contains several weakly connected components of $G_1^f$, then $\pi$ is not a minimal-by-refinement partition that satisfies \oneF, as $S$ can be split into smaller coalitions---one for each weakly connected component of $G_1^f$ that it contains.

Next, we establish a useful connection between the utility of agents in $\agents_1$ in 
the minimal-by-refinement \oneF\ partition and in 
an optimal partition.

\begin{lemma}\label{lemma:LBoneF}
    Let $\pi$ be the unique minimal-by-refinement partition that provides \oneF\ to all agents in $\agents_1$. Then, for all $i\in\agents_1$ we have
    $u_i(\pi)\geq \opt$, 
    and $\pi(i)\subseteq \pi^*(i)$ for every optimal partition $\pi^*$.
\end{lemma}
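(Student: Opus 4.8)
The plan is to combine the two structural facts already available---the bounds of \Cref{obs:boundsOnOptFA} and the ``all friends together'' conclusion of \Cref{cor:allFfMin}---with the explicit description of the partition produced by \oneFalgo. If $\agents_1=\emptyset$ the statement is vacuous, so assume $\agents_1\neq\emptyset$. Since every agent has at least one friend (\Cref{rem:Fneqempty}), this forces $f_{\min}=1$, hence $\agents_{\min}=\agents_1$; then \Cref{obs:boundsOnOptFA} gives $0<\opt\le 1$, and \Cref{cor:allFfMin} tells us that in \emph{every} optimal partition $\pi^*$ each agent $i\in\agents_1$ shares her coalition with her unique friend, which I will denote $f(i)$ (the single element of $F_i$). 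On the other hand, by construction $\pi(i)$ is exactly the weakly connected component $C$ of $G_1^f=(\agents,F_1)$ containing $i$, and $C$ contains $f(i)$ because the arc $(i,f(i))$ belongs to $F_1$.

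The first step is to establish the inclusion $\pi(i)\subseteq\pi^*(i)$. I would fix $j\in C=\pi(i)$ and take any path $i=v_0,v_1,\dots,v_t=j$ in the underlying undirected graph of $G_1^f$. Every arc of $G_1^f$ has its tail in $\agents_1$; so for each consecutive pair $\{v_\ell,v_{\ell+1}\}$, the tail $w$ of the corresponding arc lies in $\agents_1$ and the head is $w$'s unique friend $f(w)$. By \Cref{cor:allFfMin}, $\pi^*$ satisfies \allF\ for $w$, hence $f(w)\in F_w\subseteq\pi^*(w)$, and therefore $\pi^*(v_\ell)=\pi^*(v_{\ell+1})$. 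Walking along the path yields $\pi^*(i)=\pi^*(j)$, so $C\subseteq\pi^*(i)$; in particular $|\pi^*(i)|\ge|C|$.

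The utility bound then follows by a direct comparison. Since $|F_i|=1$, agent $i$ has exactly one friend in $\pi(i)$ (as $\pi$ provides \oneF\ to $i$) and exactly one friend in $\pi^*(i)$ (as $\pi^*$ provides \allF\ to $i$); the number of her enemies in any coalition $X$ with $f(i)\in X$ is $|X|-2$. Using the additively separable utilities ($v_i(j)=1$ for a friend, $-\tfrac1n$ for an enemy), we get $u_i(\pi)=1-\frac{|C|-2}{n}$ and $u_i(\pi^*)=1-\frac{|\pi^*(i)|-2}{n}$. Because $|\pi^*(i)|\ge|C|$, this gives $u_i(\pi)\ge u_i(\pi^*)\ge\ESW(\pi^*)=\opt$, which is the desired inequality.

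I expect the only delicate point to be the path argument in the second step: one must remember that $G_1^f$ is a \emph{directed} graph whose components are weak (taken in the underlying undirected graph), so an arc traversed ``against its orientation'' must still force its two endpoints into the same coalition of $\pi^*$. This works precisely because the tail of every arc of $G_1^f$ lies in $\agents_1=\agents_{\min}$, which is exactly the set of agents to which \Cref{cor:allFfMin} applies; everything else is a routine computation using $\opt\le 1$ and the friends-and-enemies utility values.
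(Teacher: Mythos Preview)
Your proof is correct and follows essentially the same route as the paper: both establish $\pi(i)\subseteq\pi^*(i)$ by arguing that every optimal partition must keep each weakly connected component of $G_1^f$ together, and then derive $u_i(\pi)\ge u_i(\pi^*)\ge\opt$ from the containment. The only cosmetic differences are that the paper phrases the containment step more tersely (invoking the earlier observation that any partition satisfying \oneF\ for $\agents_1$ must contain each weak component of $G_1^f$ in a single coalition, rather than spelling out your path argument), and it concludes the utility inequality abstractly from $F_i\subseteq\pi(i)\subseteq\pi^*(i)$ rather than computing the two utilities explicitly.
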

\begin{proof}
Assume $\agents_1 \neq \emptyset$; otherwise, the statement is immediate.

Let $\pi^*$ be an optimal partition.
By \Cref{obs:boundsOnOptFA}  we have $\opt>0$; hence, $\pi^*$ satisfies condition \oneF\ for all agents.
As previously observed, for \oneF\ to hold for all agents in $\agents_1$, the members of each weakly connected component of $G_1^f$ must belong to the same coalition.
Moreover, we have already established that the unique minimal-by-refinement partition $\pi$ satisfying \oneF\ for the agents in $\agents_1$ is precisely the partition where each coalition is a weakly connected component of $G_1^f$.
Consequently, for every $i \in \agents_1$ we have $\pi(i) \subseteq \pi^*(i)$.
Moreover, 
since $F_i \subseteq \pi(i) \subseteq \pi^*(i)$, we obtain $u_i(\pi) \geq u_i(\pi^*) \geq \opt$.
\end{proof}

By \Cref{lemma:LBoneF}, $\oneFalgo(\instance)$ guarantees utility of at least $\opt$ to the agents in $\agents_1$; however, this algorithm may place some agents in $\agents \setminus\agents_1$ in singleton coalitions, so that their utility is $0$.

Clearly, if $\agents_1=\agents$, then, by \Cref{lemma:LBoneF}, \oneFalgo\ computes a \maxEg\ partition. This leads to the following theorem.

\begin{theorem}
    Under $\FA$, if $f_i=1$ for all $i\in\agents$ then \maxEg\ is in \classP. Furthermore, $\OPT$ is the unique minimal-by-refinement partition satisfying \oneF\ (which in this case coincides with \allF) for all agents.
\end{theorem}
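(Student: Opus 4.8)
The plan is to derive the statement directly from \Cref{obs:boundsOnOptFA}, \Cref{lemma:LBoneF}, and the structural description of \oneFalgo\ given just above \Cref{lemma:LBoneF}. Since $f_i=1$ for every agent, we have $\agents_1=\agents$ and $f_{\min}=1$; in this regime \allF\ and \oneF\ coincide for every agent (both say ``the coalition contains the agent's unique friend''), and $G_1^f=\Gf$, so \oneFalgo\ simply returns the list of weakly connected components of $\Gf$, which is computable in polynomial time.

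First I would invoke \Cref{obs:boundsOnOptFA} to get $0<\opt\le f_{\min}=1$. Then, letting $\pi=\oneFalgo(\instance)$, I would recall from the discussion preceding \Cref{lemma:LBoneF} that $\pi$ is the unique minimal-by-refinement partition guaranteeing \oneF\ to all agents in $\agents_1$; since here $\agents_1=\agents$, this says precisely that $\pi$ is the unique minimal-by-refinement partition guaranteeing \oneF\ (equivalently, \allF) to all agents. Applying \Cref{lemma:LBoneF} with $\agents_1=\agents$ yields $u_i(\pi)\ge\opt$ for every $i\in\agents$, hence $\ESW(\pi)=\min_{i\in\agents}u_i(\pi)\ge\opt$; as $\opt$ is the maximum attainable egalitarian welfare, $\ESW(\pi)=\opt$, so $\pi$ is optimal. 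Consequently \oneFalgo\ solves \maxEg\ in polynomial time, establishing membership in \classP, and for the ``furthermore'' part we may take $\OPT=\pi$, which by the above is exactly the (well-defined) unique minimal-by-refinement partition satisfying \oneF\ ($\equiv$ \allF) for all agents.

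The one subtlety I would flag explicitly is that the optimal \emph{value} may be attained by several partitions — e.g.\ on a $\Gf$ consisting of two disjoint mutual-friend pairs, both the pair partition and the grand coalition have $\ESW=\tfrac12$ — so the uniqueness in the statement should be read as: the canonical optimal partition produced by \oneFalgo\ coincides with the unique minimal-by-refinement partition guaranteeing \oneF. With that reading fixed, the argument is essentially an immediate corollary of \Cref{lemma:LBoneF}, and there is no real obstacle; the only routine points to verify are that $f_i=1$ for all $i$ forces $f_{\min}=1$ and \allF$\equiv$\oneF, and that computing weakly connected components runs in polynomial time.
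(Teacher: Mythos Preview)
Your argument is correct and is exactly the paper's approach: the theorem is stated as an immediate consequence of \Cref{lemma:LBoneF} with $\agents_1=\agents$, and you have spelled out precisely that derivation.

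One small correction to your closing remark: in the instance of two disjoint mutual-friend pairs with $n=4$, the pair partition has $\ESW=1$ (each agent has her one friend and no enemies in her coalition), not $\tfrac12$; only the grand coalition gives $\tfrac12$, so that example does not witness non-uniqueness. Your broader observation that the optimal \emph{value} can be attained by more than one partition is nonetheless correct---for instance, take a directed $5$-cycle together with two disjoint mutual pairs ($n=9$): the weakly-connected-components partition and the partition that merges the two pairs both achieve $\ESW=\tfrac23$---so reading ``$\OPT$'' in the statement as ``the canonical optimum returned by \oneFalgo'' is indeed the right interpretation.
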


\subsection{Randomized Constant Approximation}
In what follows, we combine \oneFalgo\ and \weaklyC\ to obtain a randomized algorithm
whose approximation ratio is at most $2$.
To start, we need to define the approximation ratio in the randomized setting. 
Since for a given instance $\instance$ the space of all possible outcomes is finite, a randomized algorithm $\mathcal{A}$ can be viewed as a probability distribution over a finite set of deterministic algorithms $\Sigma$, 
where each algorithm in $\Sigma$ computes a partition of the agents in~$\instance$;
we write $\mathcal{A} \sim \Delta(\Sigma)$. 

There are two natural approaches to measuring the performance of a randomized algorithm in our setting: 
(1) the expected minimum utility, over the random choices of the algorithm, as given by 
$\exptDistr{\min_{i\in\agents} u_i(\mathcal{A}(\instance))}{\mathcal{A} \sim \Delta(\Sigma)}$, and
(2) the minimum expected utility over all agents, as given by $\min_{i\in\agents}\exptDistr{ u_i(\mathcal{A}(\instance))}{\mathcal{A} \sim \Delta(\Sigma)}$.

It is easy to see that the first measure is more stringent, i.e., 
$$
\exptDistr{\min_{i\in\agents} u_i(\mathcal{A}(\instance))}{\mathcal{A} \sim \Delta(\Sigma)}\leq 
\min_{i\in\agents} \exptDistr{ u_i(\mathcal{A}(\instance))}{\mathcal{A} \sim \Delta(\Sigma)}.
$$

Indeed, under the first approach we cannot expect to get an improved approximation ratio from randomization, since 
for every instance $\instance$ we have
\begin{align*}
    \exptDistr{\min_{i\in\agents} u_i(\mathcal{A}(\instance))}{\mathcal{A} \sim \Delta(\Sigma)} 
    \leq \max_{\mathcal{A}\in\Sigma}\min_{i\in\agents} u_i(\mathcal{A}(\instance)) \, ,
\end{align*}
i.e., the expected egalitarian welfare of a randomized algorithm $\mathcal{A} \sim \Delta(\Sigma)$ cannot be better than the egalitarian welfare achieved deterministically by the best algorithm in $\Sigma$.

Thus, we pursue the second approach, and define the approximation ratio as follows:
\begin{align*}
    r_n(\mathcal{A} \sim \Delta(\Sigma)) = \sup_{\substack{\instance=(\agents, F):\\|\agents|=n}} \frac{\opt(\instance)}{\min_{i\in\agents} \exptDistr{ u_i(\mathcal{A}(\instance))}{\mathcal{A} \sim \Delta(\Sigma)}} \, , 
\end{align*} 
and $r(\mathcal{A} \sim \Delta(\Sigma))=\sup_{n\in\mathbb N}r_n(\mathcal{A} \sim \Delta(\Sigma))$.

We are ready to define our randomized algorithm,
\randAlgo.
\paragraph{\randAlgo:}
With probability $\alpha$, executes algorithm $\weaklyC$, and with probability $1-\alpha$ it executes algorithm $\oneFalgo$, 
where the parameter $\alpha \in [0,1]$ will be set later and will be dependent on the input instance.

\begin{theorem}\label{thm:approxRandom}
    Under $\FA$, 
    there exists  
    a choice of 
    $\alpha$ such that 
    \[
    r_n(\randAlgo) \leq 2- \frac{5}{n+3} \, .
    \]
\end{theorem}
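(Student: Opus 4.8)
The plan is to split on $f_{\min}=\min_{i}f_i$ and let $\alpha$ depend on the instance. If $f_{\min}\ge 2$ I would set $\alpha=1$, so that $\randAlgo$ deterministically returns the output of $\weaklyC$; then \Cref{thm:apxWC} already gives $r_n(\randAlgo)=r_n(\weaklyC)\le 2-\tfrac{6}{n+3}\le 2-\tfrac{5}{n+3}$, and the theorem holds on every instance with $f_{\min}\ge 2$. So the real work is the case $f_{\min}=1$, i.e.\ $\agents_1\ne\emptyset$; here \Cref{obs:boundsOnOptFA} gives $0<\opt\le 1$, and I would set $\alpha=\tfrac{n}{2n+1}$. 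Fix such an instance and write $\sigma=\weaklyC(\instance)$ and $\tau=\oneFalgo(\instance)$.

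The argument rests on three per-agent lower bounds. (i) From the proof of \Cref{thm:apxWC}, $u_i(\sigma)\ge f_i\tfrac{n+1}{n}-\tfrac{n-1}{n}$ for every $i$; hence $u_i(\sigma)\ge\tfrac2n$ if $i\in\agents_1$ (so $f_i=1$) and $u_i(\sigma)\ge\tfrac{n+3}{n}$ if $i\notin\agents_1$ (so $f_i\ge 2$). (ii) By \Cref{lemma:LBoneF}, $u_i(\tau)\ge\opt$ for every $i\in\agents_1$. (iii) The key new estimate: $u_j(\tau)\ge\opt-1$ for every $j\notin\agents_1$. To see (iii), note it is immediate when $j$ is a singleton of $\tau$ (then $u_j(\tau)=0\ge\opt-1$, as $\opt\le 1$), so suppose $j$ lies in a component $S$ of $G_1^f$ with $|S|\ge 2$. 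Every edge of $G_1^f$ leaves an agent of $\agents_1$ and $j\notin\agents_1$, so $j$ has a neighbour $i\in\agents_1$ with $F_i=\{j\}$; by \Cref{lemma:LBoneF}, $S=\tau(i)\subseteq\pi^*(i)$ for any optimal $\pi^*$, and since $j\in S$ this forces $\pi^*(j)=\pi^*(i)\supseteq S$. From $u_i(\pi^*)=1-\tfrac{|\pi^*(i)|-2}{n}\ge\opt$ we get $|\pi^*(j)|\le n(1-\opt)+2$, hence $|\pi^*(j)\cap F_j|+|\pi^*(j)\cap E_j|\le n(1-\opt)+1$; combining this with $u_j(\pi^*)=|\pi^*(j)\cap F_j|-\tfrac{|\pi^*(j)\cap E_j|}{n}\ge\opt$ gives $|\pi^*(j)\cap E_j|\le n(1-\opt)$, and therefore $u_j(\tau)=u_j(S)\ge-\tfrac{|S\cap E_j|}{n}\ge-\tfrac{|\pi^*(j)\cap E_j|}{n}\ge\opt-1$.

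Now the randomized guarantees fall out. With $\alpha=\tfrac{n}{2n+1}$ one has $\alpha\cdot\tfrac{n+1}{n}=(1-\alpha)\cdot 1=\tfrac{n+1}{2n+1}$, which makes the bounds for the two kinds of agents coincide: for $i\in\agents_1$,
\[
\expectation{u_i(\randAlgo(\instance))}\ \ge\ \alpha\cdot\tfrac2n+(1-\alpha)\opt\ =\ \frac{2+(n+1)\opt}{2n+1},
\]
and for $j\notin\agents_1$, using (i) and (iii),
\[
\expectation{u_j(\randAlgo(\instance))}\ \ge\ \alpha\cdot\tfrac{n+3}{n}+(1-\alpha)(\opt-1)\ =\ \frac{2+(n+1)\opt}{2n+1}.
\]
Hence $\min_i\expectation{u_i(\randAlgo(\instance))}\ge\tfrac{2+(n+1)\opt}{2n+1}$. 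Since $t\mapsto\tfrac{(2n+1)t}{2+(n+1)t}$ is increasing on $(0,1]$ and $\opt\le 1$,
\[
\frac{\opt}{\min_i\expectation{u_i(\randAlgo(\instance))}}\ \le\ \frac{(2n+1)\opt}{2+(n+1)\opt}\ \le\ \frac{2n+1}{n+3}\ =\ 2-\frac{5}{n+3}.
\]
Taking the supremum over all $n$-agent instances (of either type) yields $r_n(\randAlgo)\le 2-\tfrac{5}{n+3}$.

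The one genuinely delicate point is estimate (iii): because $\oneFalgo$ can place an agent $j\notin\agents_1$ in a coalition full of enemies and no friends, $u_j(\tau)$ really can be negative, so dropping $\tau$'s contribution is not allowed; the content of (iii) is that whenever this happens $\opt$ is small by exactly the same amount, which is what lets the mixing with $\sigma$ absorb the loss. Given (i)--(iii), the choice $\alpha=\tfrac{n}{2n+1}$ and the remaining algebra are routine.
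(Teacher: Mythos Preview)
Your proof is correct and in fact more careful than the paper's at the one place you flag as delicate. Both arguments set $\alpha=1$ when $\agents_1=\emptyset$ and invoke \Cref{thm:apxWC}; the substantive difference is in the case $\agents_1\neq\emptyset$. There the paper chooses the instance-dependent $\alpha=\tfrac{v}{1+1/n+v}$ with $v=\min_{i\in\agents_1}u_i(\oneFalgo(\instance))$, and for $j\notin\agents_1$ it simply writes $\expectation{u_j(\randAlgo(\instance))}\ge\alpha\bigl(1+\tfrac{3}{n}\bigr)$, tacitly dropping the $(1-\alpha)$-weighted term as if $u_j(\oneFalgo(\instance))\ge 0$. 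As you note, this can fail: if some $i\in\agents_1$ has $F_i=\{j\}$ while $i\in E_j$, then $j$'s coalition under $\oneFalgo$ contains an enemy of $j$ and possibly no friend, so $u_j(\oneFalgo(\instance))<0$. Your estimate~(iii), $u_j(\oneFalgo(\instance))\ge\opt-1$, is exactly the missing ingredient; the argument you give---bounding $|\pi^*(j)|$ through the constraint $u_i(\pi^*)\ge\opt$ for an adjacent $i\in\agents_1$, then transferring the enemy count from $\pi^*(j)$ back to $S\subseteq\pi^*(j)$---is sound. With~(iii) in hand you can take the simpler choice $\alpha=\tfrac{n}{2n+1}$ (depending only on $n$), which equalises the two per-agent lower bounds and leads to the same final ratio $\tfrac{2n+1}{n+3}=2-\tfrac{5}{n+3}$ via the same monotonicity-in-$\opt$ step the paper uses. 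So the overall architecture matches, but your handling of agents outside $\agents_1$ is the genuinely new part, and it repairs a step the paper glosses over.
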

\begin{proof}
 Given an instance $\instance$, let us set $\pi=\weaklyC(\instance)$ and $ \pi'=\oneFalgo(\instance)$. Moreover, let $v=\min_{i\in \agents_1 } u_i(\pi')$.

 If $\agents_1= \emptyset$, we set $\alpha=1$: then \randAlgo\ coincides with \weaklyC, so its approximation ratio for the case $\agents_1= \emptyset$ is bounded by $2- \frac{6}{n+3}$.

 Assume now $\agents_1\neq \emptyset$. We will estimate the expected utility of an agent $i\in\agents$ depending on whether $i\in \agents_1$ or $i\in \agents\setminus \agents_1$.

 If $i\in \agents_1$, then $u_i(\pi)\geq \frac{2}{n}$, as shown in the proof of \Cref{thm:apxWC}, while $u_i(\pi')\geq v$. Hence, $\expectation{u_i(\randAlgo)}\geq \alpha\cdot \frac{2}{n} + (1-\alpha)\cdot v$.

 If $i\in \agents\setminus \agents_1$, then $u_i(\pi)\geq 1 + \frac{3}{n}$, as in the worst case $i$ has only two friends and is put in a coalition with all her enemies, while $u_i(\pi')$ may be $0$. Consequently, $\expectation{u_i(\randAlgo)}\geq \alpha\cdot  \left( 1 +\frac{3}{n}\right)$.

 Putting these two scenarios together, we get 
 \begin{align*}
     \min_{i\in\agents}\expectation{u_i(\randAlgo(\instance))} &\geq 
     \min \set{ v + \alpha\cdot\left( \frac{2}{n} - v\right), \alpha\cdot  \left( 1 +\frac{3}{n}\right)} \, . 
 \end{align*}

Since $v \geq \tfrac{2}{n}$, the function $v + \alpha \cdot \left(\tfrac{2}{n} - v\right)$ is decreasing in $\alpha$, whereas $\alpha \cdot \left(1 + \tfrac{3}{n}\right)$ is increasing in $\alpha$. The minimum of the two is attained when they are equal, which happens for $\alpha= \frac{v}{1+\frac{1}{n} +v }$. Hence,
\begin{equation}\label{eq:rand}
\min_{i\in\agents}\expectation{u_i(\randAlgo(\instance))}\geq \left( 1 +\frac{3}{n}\right)\cdot \frac{v}{1+\frac{1}{n} +v } \, .
\end{equation}
The right-hand side of~\eqref{eq:rand} is an increasing function of $v$ and by \Cref{lemma:LBoneF} we have $v\geq \opt(\instance)$. Hence,  $$ \min_{i\in\agents}\expectation{u_i(\randAlgo(\instance))}\geq \left( 1 +\frac{3}{n}\right)\cdot \frac{\opt(\instance)}{1+\frac{1}{n} +\opt(\instance) }.
$$ 
Now, $\agents_1\neq \emptyset$ implies $\opt(\instance) \leq 1$, so we obtain
\begin{align*}
    r_n(\randAlgo) &= \sup_{\substack{\instance=(\agents, F)\\|\agents|=n}} \frac{\opt(\instance)}{\min_{i\in\agents} \expectation{ u_i(\randAlgo(\instance))}} \\
    &\leq \frac{1+\frac{1}{n} +\opt(\instance)}{1 +\frac{3}{n}} \leq\frac{2n +1}{n+3}
    =2- \frac{5}{n+3}\, 
\end{align*}
which is what we wanted to prove.
\end{proof}

In the appendix, we give an example showing the tightness of the analysis.

\subsection{Deterministic Constant Approximation in the Symmetric Case}
In this section, we will be working with symmetric instances; therefore, we will only consider the graph $\Gsf$.

\begin{theorem}\label{thm:symmetricFA}
Under $\FA$, there exists a polynomial-time algorithm that computes a $(2-\frac{4}{n+2})$-approximation for symmetric instances.
\end{theorem}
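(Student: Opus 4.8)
The plan is to combine the two building blocks already developed—$\weaklyC$ and $\oneFalgo$—into a single deterministic algorithm, exploiting the extra leverage that symmetry gives us. Recall from Observation~\ref{obs:boundsOnOptFA} that $f_{\min}-1<\opt\le f_{\min}$, so the only genuinely hard regime is $f_{\min}=1$, i.e.\ $\agents_1\neq\emptyset$ and $\opt\in(0,1]$; when $f_{\min}\ge 2$, $\weaklyC$ alone already gives $2-\frac{6}{n+3}\le 2-\frac{4}{n+2}$ by Theorem~\ref{thm:apxWC}, so I would dispatch that case immediately and focus on $\agents_1\neq\emptyset$. In the symmetric setting, the graph $G_1^{\mathit{sf}}$ built by $\oneFalgo$—keeping only edges incident to an agent with a single friend—has a very rigid shape: each agent $i\in\agents_1$ has exactly one incident edge, and by symmetry this edge goes to $i$'s unique friend. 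Hence every weakly connected component of $G_1^{\mathit{sf}}$ containing a vertex of $\agents_1$ is a \emph{star} centered at some agent $w$, whose leaves are precisely the agents in $\agents_1$ whose unique friend is $w$ (possibly together with $w$ itself if $w\in\agents_1$, in which case the component is a single edge). This is the structural fact that makes the deterministic bound recoverable.

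The algorithm I propose: first run $\oneFalgo$ on $\Gsf$ to obtain the star-components $S_1,\dots,S_t$ covering all agents in $\agents_1$ (plus singletons for the remaining agents, which we discard). Then, on the graph $\Gsf$, take each star center $w_h$ and \emph{attach} to the coalition $S_h$ every other friend of $w_h$ together with the weakly connected component structure needed so that those friends are themselves $\oneF$-satisfied—concretely, merge $S_h$ with the coalitions that $\weaklyC$ would produce on the subgraph induced by $\agents\setminus\agents_1$, choosing for each center a component of $\Gsf\setminus\agents_1$ that contains at least one friend of $w_h$ (one exists because $w_h$ itself, if $w_h\notin\agents_1$, lies in such a component, and if $w_h\in\agents_1$ its friend does). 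Output the resulting partition $\pi$. The point is that every agent $i\in\agents_1$ now sees her unique friend $w_h$ \emph{and} nothing forces extra enemies beyond what $\weaklyC$ already incurs, so $u_i(\pi)\ge 1-\frac{n-2}{n}=\frac{2}{n}$; and I claim we can do strictly better using the bound $u_i(\pi)\ge f_i\cdot\frac{n+1}{n}-\frac{|\pi(i)|-1}{n}$ together with the fact that $\pi(i)$ can be kept small—of size at most $f_{w_h}+|\,\text{leaves}\,|+\dots$, and in the worst case one bounds $|\pi(i)|\le$ something like $\tfrac{n}{2}$ via a counting/pigeonhole argument over the stars, which is where the $\frac{4}{n+2}$ saving comes from.

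The two claims to verify are: (i) every agent \emph{not} in $\agents_1$ ends up in a coalition containing at least one friend—this is immediate for agents placed via the $\weaklyC$-on-$(\agents\setminus\agents_1)$ step since each such component is weakly connected and every such agent has a friend (indeed $f_i\ge 2$) in it, hence $u_i(\pi)\ge 1+\frac{3}{n}\cdot(\text{something})$, and in the symmetric case one can argue $u_i(\pi)\ge\frac{n+2}{2}\cdot\frac{2}{n}$-type bounds; and (ii) the minimum over $\agents_1$ of $u_i(\pi)$ is at least $\frac{2}{n}\cdot\frac{\opt}{\dots}$ scaled appropriately—here I would invoke Lemma~\ref{lemma:LBoneF}, which already gives $u_i(\oneFalgo(\instance))\ge\opt$ for $i\in\agents_1$, and show that the merging step only \emph{adds} enemies in a controlled way so that $u_i(\pi)\ge u_i(\oneFalgo(\instance))-\frac{(\text{enemies added})}{n}\ge \opt-\frac{n/2}{n}$, and then combine with $\opt\le 1$ to extract the ratio. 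The main obstacle is controlling the enemy count $|\pi(i)|-f_i$ after merging: a careless merge can blow up a coalition to size $\Theta(n)$, destroying the bound, so the heart of the proof is to argue that one can always choose, for each star center, a small component of $\Gsf\setminus\agents_1$ to attach—or, if no small choice exists, that then $\opt$ must itself be small enough to absorb the loss. Balancing these two cases (big-attachment-but-small-$\opt$ versus small-attachment) against each other is exactly what yields the tight constant $2-\frac{4}{n+2}$, and I expect the cleanest route is to set up a single inequality, as in the proof of Theorem~\ref{thm:approxRandom}, of the form $u_i(\pi)\ge\min\{\,\opt-\tfrac{s}{n},\,1+\tfrac{c}{n}\,\}$ for the relevant parameters $s,c$, and optimize.
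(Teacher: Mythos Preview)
Your high-level diagnosis is right: $f_{\min}\ge 2$ is handled by \weaklyC, and for $f_{\min}=1$ symmetry forces the $\oneFalgo$-components to be stars. But the concrete algorithm you propose---merging each star $S_h$ with a connected component of $\Gsf[\agents\setminus\agents_1]$ containing a friend of the center---does not yield the claimed ratio. Consider a symmetric instance with a single agent $a\in\agents_1$ whose unique friend is $w$, where $w$ has one further friend~$3$, and $3,4,\dots,n$ form a cycle. Then $\Gsf[\agents\setminus\agents_1]$ is a single component of size $n-1$, so your merge produces the grand coalition and $u_a=1-\tfrac{n-2}{n}=\tfrac{2}{n}$; yet the partition $\{\{a,w\},\{3,\dots,n\}\}$ has $\ESW=1=\opt$, giving ratio $n/2$. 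Your two escape routes both fail here: there is no ``small component'' to choose (components are determined by the graph, not chosen), and $\opt$ is not small---it is maximal. More generally, once you merge a star with any set of size~$s$, the leaves in $\agents_1$ lose $\tfrac{s}{n}$ in utility, and nothing in your argument ties $s$ to $\opt$.

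The paper's route is essentially the opposite of merging: it \emph{preserves} the star coalitions (so agents in $\agents_1$ keep $u_i\ge\opt$ by Lemma~\ref{lemma:LBoneF}) and handles $\agents\setminus\overline{\agents}_1$ separately. The key mechanism is a size threshold of $\tfrac{n}{2}+1$: agents outside the stars are either (i) ``lonely'' (all friends are star centers) and assigned to stars via a b-matching that minimizes the maximum coalition size, or (ii) placed in a single residual coalition $\mathcal U_0$, after iteratively peeling off agents with at most one friend in~$\mathcal U$ while $|\mathcal U|>\tfrac{n}{2}+1$. This guarantees that every coalition of size exceeding $\tfrac{n}{2}+1$ either came from the star/b-matching stage (hence $u_i\ge\opt$) or is $\mathcal U_0$ with every agent having two friends inside (hence $u_i>1\ge\opt$); all other coalitions have size $\le\tfrac{n}{2}+1$, giving $u_i\ge 1-\tfrac{n/2-1}{n}=\tfrac{n+2}{2n}$, and the ratio $\tfrac{2n}{n+2}=2-\tfrac{4}{n+2}$ follows. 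The missing idea in your proposal is precisely this size-threshold case split together with the b-matching/peeling machinery that enforces it; a one-shot merge cannot achieve it.
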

\begin{proof}
If $\agents_1= \emptyset$, then the approximation ratio of \weaklyC\ (which in this case simply computes the connected components of $\Gsf$) is $2-\frac{6}{n+3} < 2-\frac{4}{n+2}$.
Thus, hereafter, we assume $\agents_1\neq \emptyset$.  
Our algorithm proceeds in three stages.

{\bf Stage 1:\ } Let ${\mathcal R} = \set{j\mid j\in F_i\text{ for some }i\in\agents_1}$ and $\overline{\agents}_1=\agents_1\cup\mathcal R$; that is, $\overline{\agents}_1$ contains the agents having only one friend as well as their unique friends. 
In the first stage, we construct the minimal partition of the agents in $\overline{\agents}_1$ that places each agent in $\agents_1$ together with her unique friend.
To this end, for each $r\in \mathcal R$ let $S(r)=\{i\in\agents_1\mid r\in F_i\}$, and note that the sets $\{S(r)\}_{r\in {\mathcal R}}$ form a partition of 
$\overline{\agents}_1$, which we will denote by $\pi_1$ (if two agents $i_1, i_2\in \agents_1$ are each other's unique friends, they both appear in $\mathcal R$; the sets $S(i_1)$ and $S(i_2)$ then coincide, and $\pi_1$ includes a single copy of this set).
By construction, each $S(r)$, $r\in \mathcal R$, induces a star in $\Gsf$.
Furthermore, the utility of the root of a star is lower-bounded by the utility of its leaves, all of whom share the same utility. Note that, by \Cref{lemma:LBoneF}, for each partition $\pi^*$ of $\agents$ that maximizes $\ESW$ and for each $i\in\overline{\agents}_1$ we have 
$\pi_1(i)\subseteq \pi^*(i)$ and hence $u_i(\pi_1)\ge\opt$.

{\bf Stage 2:\ } Next, let ${\mathcal L}=\{i\in \agents\setminus\overline{\agents}_1\mid F_i\subseteq \overline{\agents}_1\}$ be the set of {\em lonely} agents in $\agents\setminus\overline{\agents}_1$, i.e., agents with no friends in $\agents\setminus\overline{\agents}_1$. In the second stage,
we deal with agents in $\mathcal L$. Note that, for each $i\in \mathcal L$ we have $F_i\subseteq \mathcal R\setminus \agents_1$, so, 
to obtain positive utility, agents in $\mathcal L$ need to be added to coalitions in $\pi_1$. We add agents in $\mathcal L$ to coalitions in $\pi_1$ so as to minimize the size of the largest coalition in the resulting partition of $\overline{\agents}_1\cup \mathcal L$, subject to the condition 
that each agent in $\mathcal L$ is placed in a coalition with one of her friends.

To this end, we construct a bipartite graph with parts $\mathcal L$ and $\mathcal R$ (where there is an edge between $i\in \mathcal L$ and $r\in \mathcal R$ if and only if $\{i, r\}\in\Gsf$), and solve a sequence of b-matching problems on this graph. Specifically, we iterate through $\kappa=\max_{r\in R}|S_r|, \dots, n$ and check if this graph admits a b-matching of size $|{\mathcal L}|$ where the capacity of each $i\in \mathcal L$ is $1$ and the capacity of $r\in \mathcal R$ is $\kappa-|S_r|$; we identify the smallest $\kappa$ for which such a b-matching exists and place each $i\in \mathcal L$ into the coalition $S(r)$ such that $i$ is matched to $r$. This step can be executed in polynomial time, since the problem of finding a b-matching in a bipartite graph is polynomial-time solvable. 

Let $\pi_2$ be the resulting partition of $\overline{\agents}_1\cup\mathcal L$; by construction, for each agent $i\in \overline{\agents}_1\cup \mathcal L$ we have $u_i(\pi_2)\ge \opt$.

{\bf Stage 3:\ } Finally, let ${\mathcal U}=\agents\setminus({\mathcal L}\cup \overline{\agents}_1)$ be the set of  agents who remained unassigned by the end of Stage~2. In the third stage, we deal with agents in $\mathcal U$. Note that, by construction of $\mathcal L$ (and by symmetry), no agent in $\mathcal U$ has a friend in $\mathcal L$, and hence each agent in $\mathcal U$ has at least one friend in $\mathcal U$. 

If every agent in $\mathcal U$ has at least $2$ friends in $\mathcal U$, or if $\modulus{{\mathcal U}} \leq \frac{n}{2} +1$, we put all agents in $\mathcal U$ in the same coalition ${\mathcal U}_0:= \mathcal U$, add this coalition to $\pi_2$, and return the resulting partition $\pi_3=\pi_2\cup\{{\mathcal U}_0\}$.

Otherwise, we repeat 
the step ($*$) described below until either $\modulus{{\mathcal U}} \le \frac{n}{2} +1$ or all agents in $\mathcal U$ have at least two friends in $\mathcal  U$:

\begin{itemize}
\item[($*$)] Pick an agent $i\in \mathcal U$ with at most one friend in $\mathcal U$. If the unique friend of $i$ in $\mathcal U$ is $j$ and $F_j\cap {\mathcal U}=\{i\}$, add $\{i, j\}$ to the partition and remove $i, j$ from $\mathcal U$. Otherwise, identify
a coalition $S$ in the current partition that contains a friend of $i$ (this is always possible since by construction of $\mathcal U$ we have $|F_i|\ge 2$ for each $i\in \mathcal U$), add $i$ to $S$, and remove $i$ from $\mathcal U$. 
\end{itemize}

Observe that after step ($*$) our partition may contain coalitions that are not stars: e.g., the first time we execute this step, we may add an agent $i$ to a star $S(r)$ such that $r$ is a friend of $i$, but at a later iteration of step ($*$) we may add a friend of $i$ to this coalition.

Let ${\mathcal U}_0$ be what remains of $\mathcal U$ at the end of this process. Place all agents in ${\mathcal U}_0$ in the same coalition, and add it to the partition; let $\pi_3$ be  
 the resulting coalition structure.
This completes the description of our algorithm. We now claim that its approximation ratio is at most $2-\frac{4}{n+2}$. 

Note that in $\pi_3$ each agent is in a coalition with at least one friend, and consider a coalition $S\in \pi_3$.

If $|S|\leq \frac{n}{2} +1$, 
then for each $i\in S$ we have 
\begin{align}\label{eq:symmetricApx}
u_i(S)\geq 1 - \frac{\modulus{S}-2}{n}\geq 1- \frac{\frac{n}{2} -1}{n} = \frac{1}{2} + \frac{1}{n} = \frac{n+2}{2n}\, .    
\end{align}

On the other hand, suppose that $|S|> \frac{n}{2} +1$. This can happen if (i) $S$ was constructed during stages 1 and 2, 
or (ii) $S={\mathcal U}_0$ and each agent in ${\mathcal U}_0$ has at least two friends in ${\mathcal U}_0$. Note, in particular, that we cannot create coalitions of that size in step ($*$), since this step is only executed when there are at least $\frac{n}{2}+1$ agents in $\mathcal U$.
In case (i), we have argued that $u_i(S)\ge\opt$ for each $i\in S$, and in case (ii), the utility of each agent in $S$ is at least $2-\frac{|S|-3}{n}>1\ge\opt$.

Thus, if for some agent $i\in\agents$ we have $u_i(\pi_3)<\opt$, then 
$u_i(\pi_3)\ge \frac{n+2}{2n}$. On the other hand, $\agents_1\neq\emptyset$ implies $\opt\le 1$. Hence,   
the approximation ratio of our algorithm does not exceed $\frac{2n}{n+2}=2-\frac{4}{n+2}$, as desired.
\end{proof}

For completeness, the appendix provides the pseudocode of the algorithm described in the proof of \Cref{thm:symmetricFA} together with a discussion of its polynomial-time complexity. Moreover, we also demonstrate that our analysis is tight. 

Finally, we show that, for symmetric instances where $\Gsf$ is a forest, the problem of finding a \maxEg\ partition is polynomial-time solvable. To this end, we establish a connection to a different class of hedonic games, namely, {\em fractional hedonic games (FHGs)}~\cite{aziz2019fractional}. In these games, too, agents assign values to each other, with agent $i$ assigning value $v_i(j)\in\mathbb R$ to agent $j$, and the utility an agent $i\in \agents$ assigns to a coalition $C\in\agents^i$ is computed as $\frac{1}{|C|}\sum_{j\in C\setminus\{i\}}v_i(j)$. An FHG with a set of agents $\agents$ is {\em simple} if $v_i(j)\in\{0, 1\}$ for all $i,j\in\agents$; it is {\em symmetric} if $v_i(j)=v_j(i)$ for all $i, j\in\agents$. A simple symmetric FHG $\mathcal G$ can be represented by an undirected graph $G({\mathcal G})=(\agents, {\mathcal E})$ where $\{i, j\}\in\mathcal E$ iff $v_i(j)=v_j(i)=1$.
Thus, an undirected graph $G$ with vertex set $\agents$ induces two different games on the set of agents $\agents$: a symmetric FA game and a simple symmetric FHG. 
We will now establish a tight connection between the structure of optimal solutions of these games when $G$ is acyclic.

\begin{lemma}\label{lem:fhg}
Consider a simple symmetric FHG ${\mathcal G}_1$ and a symmetric FA ${\mathcal G}_2$ 
with the same set of agents $\agents$ such that $G({\mathcal G}_1)$ is acyclic and
coincides with the strong friendship graph of ${\mathcal G}_2$. Then any partition $\pi^*$ that maximizes the $\ESW$ in ${\mathcal G}_1$ can be transformed in polynomial time into a partition $\pi'$ that 
maximizes the $\ESW$ in ${\mathcal G}_2$.
\end{lemma}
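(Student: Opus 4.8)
The plan is to give an explicit, case-based transformation and to prove it correct by identifying the $\ESW$-optimal partitions of both games. Write $G := G({\mathcal G}_1)$; by hypothesis $G$ is also the strong friendship graph of ${\mathcal G}_2$, and since ${\mathcal G}_2$ is symmetric, $G$ is its friendship graph as well. I would first treat the case in which $G$ has an isolated vertex $v$: then, in ${\mathcal G}_1$, $v$ has value $0$ for everybody and a simple FHG has nonnegative values, so $\opt({\mathcal G}_1)=0$; and in ${\mathcal G}_2$, $F_v=\emptyset$, so by \Cref{rem:Fneqempty} $\opt({\mathcal G}_2)=0$. In this case I simply let $\pi'$ be the all-singletons partition, which has $\ESW^{{\mathcal G}_2}(\pi')=0=\opt({\mathcal G}_2)$, so the transformation may ignore $\pi^*$ entirely. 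From now on I may assume $G$ has minimum degree at least $1$; as $G$ is acyclic, each connected component is then a tree on at least two vertices, and I will show that $\pi'=\pi^*$ is already optimal for ${\mathcal G}_2$.

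The heart of the argument is a structural characterization valid simultaneously for both games. Call a partition $\pi$ \emph{proper} if $G[C]$ has no isolated vertex for every $C\in\pi$; since $G$ is a forest, this is equivalent to $G[C]$ having minimum degree exactly $1$, and in particular forces $|C|\ge 2$. I would establish three facts. (i) In either game, $\ESW(\pi)>0$ iff $\pi$ is proper: the ``if'' direction follows because an agent with coalition-degree $d\ge 1$ in a coalition $C$ has utility $d/|C|>0$ in the FHG and $d-\frac{|C|-1-d}{n}\ge 1-\frac{|C|-2}{n}>0$ under $\FA$ (using $|C|\le n$), while the ``only if'' direction holds because an agent isolated in its coalition --- in particular a singleton --- has utility $\le 0$ in both games, so one such agent drives $\ESW$ down to $\le 0$. (ii) Both optima are strictly positive, since taking each connected component of $G$ as one coalition already yields a proper partition. (iii) For a proper $\pi$, with $\mu(\pi):=\max_{C\in\pi}|C|$, one has $\ESW^{{\mathcal G}_1}(\pi)=1/\mu(\pi)$ and $\ESW^{{\mathcal G}_2}(\pi)=1-\frac{\mu(\pi)-2}{n}$, because within a coalition $C$ the per-agent utility is strictly increasing in the coalition-degree (slope $1/|C|$ in the FHG, $1+\frac1n$ under $\FA$), so the least happy agent in $C$ is a degree-$1$ vertex of the forest $G[C]$, and over all coalitions the minimum is attained in the largest one. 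Since both $1/\mu$ and $1-\frac{\mu-2}{n}$ are strictly decreasing in $\mu$, facts (i)--(iii) imply that in \emph{both} games the $\ESW$-maximizers are exactly the proper partitions minimizing $\mu$ --- a family that depends only on $G$. Hence the optimal partitions of ${\mathcal G}_1$ and ${\mathcal G}_2$ coincide; in particular $\pi^*$ is optimal for ${\mathcal G}_2$, and ``$\pi'=\pi^*$'' is the desired (trivially polynomial-time) transformation.

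The step I expect to require the most care is (iii): showing that inside a proper coalition the minimum utility is attained precisely at a leaf of the forest $G[C]$, and that these coalition-wise minima aggregate to a clean function of $\mu(\pi)$ alone. This relies on $G[C]$ being a forest on at least two vertices, so it has a degree-$1$ vertex but no vertex of degree $0$, together with the monotonicity of the per-agent utility in the coalition-degree in each model; it is exactly this shared monotone structure that makes the two welfare landscapes order-isomorphic over proper partitions. The remaining ingredients --- the isolated-vertex case, the positivity of the optima, and the closed-form evaluations of $\ESW$ --- are routine bookkeeping.
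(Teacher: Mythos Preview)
Your proof is correct, and it takes a genuinely different route from the paper. The paper's argument is transformational: it shows that any coalition can be split into stars without lowering $\ESW$ in either game, applies this to $\pi^*$ to obtain a star partition $\pi'$, and then argues by contradiction that $\pi'$ is optimal for ${\mathcal G}_2$ (a better $\pi''$ could also be taken to consist of stars, and its largest star would be smaller, contradicting optimality of $\pi'$ for ${\mathcal G}_1$). Your argument is structural and more direct: you observe that for any proper partition $\pi$, the $\ESW$ in \emph{both} games is a strictly decreasing function of $\mu(\pi)$ alone---because acyclicity forces every proper $G[C]$ to contain a degree-$1$ vertex, and the minimum utility in $C$ is attained there in both models. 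This collapses the two optimization problems to the same combinatorial one (minimize $\mu$ over proper partitions), so the sets of optima coincide and $\pi'=\pi^*$ works with no transformation at all.

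What each approach buys: the paper's star-splitting gives a canonical form (star partitions) that may be independently useful, and its argument is slightly more ``local'' in flavor. Your approach yields a strictly stronger conclusion---the optimal partitions of ${\mathcal G}_1$ and ${\mathcal G}_2$ are \emph{identical}, not merely transformable into one another---and is more elementary, relying only on the leaf-existence property of nontrivial forests and monotonicity of utility in coalition-degree. Your explicit treatment of the isolated-vertex case is also cleaner than the paper's implicit reliance on the standing assumption $F_i\neq\emptyset$.
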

\begin{proof}
Assume without loss of generality that $G({\mathcal G}_1)$ is a tree; otherwise, we apply the subsequent argument to each connected component of $G({\mathcal G}_1)$.   

Let $\pi$ be a partition of $\agents$, and consider a coalition $C\in\pi$.
Suppose first that $C$ is not connected. Then splitting $C$ into its connected components is a Pareto improvement for all agents in both ${\mathcal G}_1$ and ${\mathcal G}_2$ and hence does not decrease the $\ESW$ in either game. Now, suppose $C$ is not a star. Then $C$ has two non-leaf agents connected by an edge; let $e$ be some such edge, and let $C'$ and $C''$ be the coalitions obtained by splitting $C$ along $e$. In each of $C$, $C'$ and $C''$, the lowest-utility vertices are leaves (this holds both in FA games and in FHGs), and the utility of a leaf in $C'$ and $C''$ is higher than the utility of a leaf in $C$.
Thus, this split, too, does not decrease the $\ESW$ in either game. It follows that we can split each coalition in $\pi$ into stars without lowering the $\ESW$ in either game. 

Now, let $\pi^*$ be a partition that maximizes $\ESW$ in ${\mathcal G}_1$.
Let $\pi'$ be a partition obtained from $\pi^*$ by splitting each non-star coalition in $\pi^*$ into stars of size at least $2$; note that $\pi'$
can be obtained from $\pi^*$ in polynomial time. Moreover, the argument above shows that $\pi'$ maximizes $\ESW$ in ${\mathcal G}_1$.
Observe that, in both games, the agents with the lowest utility in $\pi'$ are the leaves of the largest stars in $\pi'$.

We claim that $\pi'$ is optimal for ${\mathcal G}_2$. Indeed, suppose not, and let $s$ be the size of the largest star in $\pi'$. Let $\pi''$ be an optimal partition for ${\mathcal G}_2$. By the argument above,
we can assume that $\pi''$ is a collection of stars, so for it to have a higher $\ESW$ in ${\mathcal G}_2$ than $\pi'$, the largest star in $\pi''$ must be of size $s''<s'$. But then the $\ESW$ of $\pi''$ in ${\mathcal G}_1$ is higher than that of $\pi'$, a contradiction.
\end{proof}

If the graph $G({\mathcal G})$ associated with an FHG $\mathcal G$ is acyclic, a \maxEg\ partition can be computed in polynomial time~\cite{aziz2015welfare,hanaka2019computational}. Together with \Cref{lem:fhg}, this implies the following result.
\begin{theorem}
    Under $\FA$, if the game instance is symmetric and $\Gsf$ is a forest, then \maxEg\ is in \classP.
\end{theorem}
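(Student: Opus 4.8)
The plan is to reduce to the analogous problem for simple symmetric fractional hedonic games and then invoke \Cref{lem:fhg}. Given a symmetric $\FA$ instance $\mathcal{G}_2$ whose strong friendship graph $\Gsf$ is a forest, I would first build the simple symmetric FHG $\mathcal{G}_1$ on the same set of agents by declaring $v_i(j)=v_j(i)=1$ exactly when $\{i,j\}\in\Gsf$ and $v_i(j)=0$ otherwise, so that $G(\mathcal{G}_1)=\Gsf$. By hypothesis this graph is acyclic, so the cited results~\cite{aziz2015welfare,hanaka2019computational} give a polynomial-time procedure computing a partition $\pi^*$ that maximizes $\ESW$ in $\mathcal{G}_1$.

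Next I would apply \Cref{lem:fhg} with this $\mathcal{G}_1$ and with $\mathcal{G}_2$ the given $\FA$ instance. Its hypotheses hold by construction: $G(\mathcal{G}_1)$ is acyclic and coincides with the strong friendship graph of $\mathcal{G}_2$ (and since the instance is symmetric, $\Gsf$ is also the friendship graph). The lemma then converts $\pi^*$, in polynomial time, into a partition $\pi'$ maximizing $\ESW$ in $\mathcal{G}_2$. Returning $\pi'$ solves \maxEg\ on the $\FA$ instance in polynomial time, which is exactly the claim.

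Two minor boundary points are worth checking but pose no difficulty. By \Cref{rem:Fneqempty} we may assume every agent has at least one friend, so $\Gsf$ has no isolated vertices; in any case both the FHG algorithm and \Cref{lem:fhg} operate connected component by connected component — each component of a forest being a tree — so disconnectedness and even isolated vertices would be handled correctly. Consequently there is essentially no combinatorial obstacle in this proof: the argument is a straight composition of two black boxes, the known polynomial-time algorithm for acyclic FHGs and \Cref{lem:fhg}. The genuinely non-trivial content lives entirely inside \Cref{lem:fhg}, namely the two facts that (i) on a forest any coalition can be split into stars without decreasing $\ESW$ in either game, and (ii) an $\ESW$-optimal collection of stars for the FHG is also $\ESW$-optimal for the $\FA$ game because in both games the minimum utility is governed solely by the size of the largest star; I would treat that lemma as already established and not reprove it here.
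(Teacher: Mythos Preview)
Your proposal is correct and matches the paper's own argument essentially verbatim: compute an $\ESW$-optimal partition for the associated simple symmetric FHG using the polynomial-time algorithm for acyclic graphs~\cite{aziz2015welfare,hanaka2019computational}, then invoke \Cref{lem:fhg} to convert it into an $\ESW$-optimal partition for the $\FA$ instance. The paper states exactly this as a one-line deduction from \Cref{lem:fhg}, so there is nothing to add.
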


While we have just shown that, when the underlying graph is a tree, solving \maxEg\ is equivalent in $\FA$ and in sFHGs, this equivalence no longer holds for general graph structures.

\begin{example}
    Consider a symmetric instance where $n\geq 4$ is an even number and $\Gsf$ is an $n$-vertex cycle. Under $\FA$, the unique optimum is
    the grand coalition, whereas in sFHGs, every optimal partition corresponds to a perfect matching in $\Gsf$.
\end{example}
\section{Conclusions and Future Work}
We studied the problem of maximizing the egalitarian welfare in Friends and Enemies Games, focusing on two classical preference models: Enemies Aversion ($\EA$) and Friends Appreciation ($\FA$).
We provided an almost complete picture of the complexity of this problem, by establishing both hardness results and polynomial-time approximation guarantees.
A natural next step is to determine whether a constant-factor approximation is indeed possible for $\FA$ or to derive an inapproximability result. Moreover, employing the egalitarian welfare as a measure of the quality of stable outcomes, similarly to the work of Monaco et al.~\cite{monaco2019performance}, also represents a promising direction for future work.
More broadly, it would be interesting to explore egalitarian welfare in other variants of Friends and Enemies Games, such as those that incorporate neutral agents (with small positive or negative effects) or assign a different negative value for enemies instead of the standard $-\frac{1}{n}$ or $-n$.
Finally, while the utilitarian and egalitarian objectives have received attention, the Nash welfare in hedonic games remains unexplored.

\section*{acknowledgments}
This work was partially supported by:  
PNRR MIUR project FAIR - Future AI Research (PE00000013), Spoke 9 - Green-aware AI; PNRR MIUR project VITALITY (ECS00000041), Spoke 2 - Advanced Space Technologies and Research Alliance (ASTRA);
the European Union - Next Generation EU under the Italian PNRR, Mission 4, Component 2, Investment 1.3, CUP J33C22002880001, partnership on ``Telecommunications of the Future''(PE00000001 - program ``RESTART''), project MoVeOver/SCHEDULE (``Smart interseCtions witH connEcteD and aUtonomous vehicLEs'', CUP J33C22002880001);
and GNCS-INdAM project, CUP\_E53C24001950001.

\bibliographystyle{ACM-Reference-Format} 
\bibliography{references}

\newpage\appendix\section{Appendix}

\subsection{Algorithm for $\FA$ With Symmetric Preferences}
\begin{algorithm}[b]
\caption{A $(2-\frac{4}{n+2})$-approximation for symmetric instances.}
\label{algo:symmetric}
\KwIn{A symmetric $\FA$ instance $\instance$ given by graph $\Gsf$}
\KwOut{A partition $\pi$}
$\agents_1 \gets \set{i\in \agents\mid f_i=1}$\\
\If{$\agents_1=\emptyset$}{\KwRet{$\weaklyC(\instance)$}}
${\mathcal R}\gets\set{j\mid j\in F_i\text{  for some } i\in\agents_1}$ \\
$\overline{\agents}_1\gets \agents_1\cup\mathcal R$\\
$\pi\gets {\emptyset}$\\
\ForEach{$r\in \mathcal R$}{
$S(r)\gets \set{r}\cup\set{i\in\agents_1\mid F_i=\set{r}}$\\
$\pi\gets \pi\cup \set{S(r)}$
}
${\mathcal U}\gets \agents \setminus \overline{\agents}_1$\\
\If{${\mathcal U}=\emptyset $}{
\KwRet{$\pi$}
}
${\mathcal L}\gets \set{i\in {\mathcal U} \mid F_i\cap {\mathcal U} =\emptyset}$\\
\If{${\mathcal L}\neq \emptyset$}{
${\mathcal U}\gets {\mathcal U}\setminus {\mathcal L}$\\
$\pi\gets \textsc{Balance}(\pi, {\mathcal L})$
}
\While{ $\exists i\in {\mathcal U}$ s.t. $\modulus{{\mathcal U}\cap F_i}< 2$ and $\modulus{{\mathcal U}}> \frac{n}{2} +1$ }{
$i\gets i\in {\mathcal U}$ s.t. $\modulus{{\mathcal U}\cap F_i}< 2$\\
\If{
$j\in {\mathcal U} \cap F_i $ s.t.\ ${\mathcal U} \cap F_j =\set{i}$
}{
${\mathcal U}\gets {\mathcal U}\setminus\set{i, j}$\\
$\pi\gets \pi \cup \set{\set{i,j}}$
}
\Else
{
${\mathcal U}\gets {\mathcal U}\setminus\set{i}$\\
$r \gets r\in \{r\in{\mathcal R}\mid S(r)\cap F_i\neq\emptyset\}$\\
$S(r) \gets S(r)\cup\set{i}$
}
}
$\pi\gets \pi\cup \set{{\mathcal U}}$\\
\KwRet{$\pi$}
\end{algorithm}

Algorithm~\ref{algo:symmetric} is a formal description of what we presented in the proof of \Cref{thm:symmetricFA}.

The procedure $\textsc{Balance}(\pi, {\mathcal L})$ takes as input a family of stars in $\Gsf$, denoted by $\pi$, and an independent set $\mathcal L$ whose nodes are connected only to the roots of stars in $\pi$. The output is a star partition where the nodes in $\mathcal L$ are placed in one of the stars so that the size of the largest star is minimized. The algorithm for $\textsc{Balance}(\pi, {\mathcal L})$ proceeds by solving a sequence of b-matching problems, as explained in the proof of  \Cref{thm:symmetricFA}.

\subsection{Tight Bounds}
The next example shows that the bound provided in \Cref{thm:apxWC} is tight.

\begin{example}\label{ex:LBweakly}
    For the case $f_{\min}=1$,
    consider an instance with $n=2k$ for some positive integer $k$, and $\agents=\set{1, \dots, n}$. Suppose that the set of friendship relationships is given by (a) $(i,i+1)\in F$ for all $i\in[n-1]$, (b) $(n,1)\in F$, and (c) $(2j, 2j-1)\in F$ for all $j\in [k]$. Note that $f_i\ge 1$ for all $i\in\agents$. 
    \weaklyC\ will put all the agents in the grand coalition, attaining an egalitarian welfare of $\frac{2}{n}$. However, creating a coalition for each pair $2j, 2j-1$, where $j\in[k]$, guarantees each agent a utility of $1$.

    For the case $f_{\min}\geq 2$, consider a graph $\Gf$ that is made of two cliques of mutual friendship relationships, of size $3$ and $n-3$, respectively, with a single directed edge from the clique of size $3$ to the clique of size $n-3$. \weaklyC\ will place all the agents in the grand coalition, attaining an egalitarian welfare of $1 + \frac{3}{n}$. However, if the two cliques are split into two disjoint, the egalitarian welfare is $2$.  
\end{example}

The next example shows that the bound provided in \Cref{thm:approxRandom} is tight.

\begin{example}
    Consider an instance $\instance$ with $n$ agents, where agents $1$ and $2$ are mutual friends, all agents $3, \dots, n-1$ are friends with each other and consider agent $n$ to be their friend, and, finally,  agent $n$ considers agents $1$ and $3$ to be her friends. We have $\opt(\instance)=1$. 
    
    Observe that \oneFalgo\ would put  $1$ and $2$ in the same coalition and place all the remaining agents in singleton coalitions, while \weaklyC\ would form the grand coalition.
    Accordingly, for this instance we have $v=1$ and $\alpha=\frac{1}{2+\frac{1}{n}}=\frac{n}{2n+1}$. Then, for $i=1, 2$ we have 
    $$
    \expectation{u_i(\randAlgo(\instance))}= \frac{n}{2n+1}\cdot \frac{2}{n} + \frac{n+1}{2n+1} = \frac{n+3}{2n+1}.
    $$
    Moreover,
    $$\expectation{u_n(\randAlgo(\instance))}= \frac{n}{2n+1}\cdot\frac{n+ 3}{n}= \frac{n+ 3}{2n+1}.
    $$
    All other agents have a higher utility in expectation. Hence, the bound is tight.
\end{example}

Finally, we show that the bound in \Cref{thm:symmetricFA} is tight.

\begin{example}
We construct a symmetric instance with $n\geq 4$ nodes, where $n=2k$ is an even number. The graph $\Gsf$ is defined as follows:
$1,2,3$ are all friends with each other, and $3, 4, \dots, n$ form a path, that is, $(i,i+1), (i+1, i)\in F^s$, for  $i=3, \dots, n-1$. Since agent $1$ only has one friend, we have $\opt\le 1$. 
Consequently, the partition $\pi=\{\{2j-1, 2j\}\mid j=1, \dots, k\}$ with $\ESW(\pi)=1$ maximizes the egalitarian welfare. However, Algorithm~\ref{algo:symmetric} initially creates $S(n-1) = \set{n-1,n}$. Then it sets $\mathcal L=\emptyset$, 
and, within the {\bf while} loop, assigns agents $n-2, n-3, n-4, \dots, k+1$ to 
be in the coalition with $n$ and $n-1$.
It then sets ${\mathcal U}_0=[k]$, i.e., it partitions the agents into two equal-sized coalitions. The egalitarian welfare of this partition is $1 - \frac{\frac{n}{2}-1}{n}= \frac{1}{2} +\frac{1}{n}$. This shows that our bound is tight. 
\end{example}

\end{document}